\DeclareMathOperator{\rank}{rank}
\newcommand{\megal}{M_\text{egal}}
\newcommand{\mopt}{M_\text{opt}}
\let\proof\relax
\newtheorem{theorem}{Theorem}
\newtheorem{pr}{Problem}
\newtheorem{cl}[theorem]{Claim}
\newtheorem{cor}[theorem]{Corollary}
\newtheorem{prop}[theorem]{Proposition}
\newcommand{\inp}{\textsf{Input: }} 
\newcommand{\outp}{\textsf{Output: }}
\newcommand{\ques}{\textsf{Question: }}
\newcommand{\true}{\mathsf{true}}
\newcommand{\false}{\mathsf{false}}
\tikzstyle{vertex} = [circle, draw=black, scale=0.7]
\tikzstyle{edgelabel} = [circle, fill=white, scale=0.8]
\begin{document}

\title{\bf The Stable Roommates problem with short lists\thanks{The authors were supported by the Hungarian Academy of Sciences under its Momentum Programme (LP2016-3/2016), its J\'anos Bolyai Research Fellowship, OTKA grant K108383, COST Action IC1205 on Computational Social Choice and by EPSRC grant EP/K010042/1.} \thanks{A preliminary version of this paper appeared in the Proceedings of SAGT 2016: the 9th International Symposium on Algorithmic Game Theory.}
}

\author{\'Agnes Cseh$^1$,
        Robert W.\ Irving$^2$ and 
        David F.\ Manlove$^2$\\
        \\
   \small $^1$ \emph{Institute of Economics, Hungarian Academy of Sciences, and} \\
   \small \emph{Department of Operations Research and Actuarial Sciences, Corvinus University of Budapest, Hungary}\\
   \small \emph{Email:} {\tt cseh.agnes@krtk.mta.hu}\\ \\
   \small $^2$ \emph{School of Computing Science, University of Glasgow, UK}\\
  \small \emph{Email:} \{{\tt Rob.Irving,David.Manlove}\}{\tt @glasgow.ac.uk}
}
\date{ }
\maketitle

\begin{abstract}
We consider two variants of the classical Stable Roommates problem with Incomplete (but strictly ordered) preference lists ({\sc sri}) that are degree constrained, i.e., preference lists are of bounded length. The first variant, {\sc egal $d$-sri}, involves finding an egalitarian stable matching in solvable instances of {\sc sri} with preference lists of length at most~$d$. We show that this problem is $\NP$-hard even if $d=3$. On the positive side we give a $\frac{2d+3}{7}$-approximation algorithm for $d\in \{3,4,5\}$ which improves on the known bound of 2 for the unbounded preference list case. In the second variant of {\sc sri}, called {\sc$ d$-srti}, preference lists can include ties and are of length at most~$d$.  We show that the problem of deciding whether an instance of {\sc $d$-srti} admits a stable matching is $\NP$-complete even if $d=3$.  We also consider the ``most stable'' version of this problem and prove a strong inapproximability bound for the $d=3$ case.  However for $d=2$ we show that the latter problem can be solved in polynomial time.
\end{abstract}
\noindent
{\bf Keywords:} stable matching; bounded length preference lists; complexity; approximation algorithm

\section{Introduction}

In the \emph{Stable Roommates problem with Incomplete lists}  (\textsc{sri}), a graph $G=(A,E)$ and a set of preference lists $\mathcal O$ are given, where the vertices $A=\{a_1,\dots,a_n\}$ correspond to \emph{agents}, and $\mathcal O=\{\prec_1,\dots,\prec_n\}$, where $\prec_i$ is a linear order on the vertices adjacent to $a_i$ in $G$ ($1\leq i\leq n$).  We refer to $\prec_i$ as $a_i$'s \emph{preference list}.  The agents that are adjacent to $a_i$ in $G$ are said to be \emph{acceptable} to~$a_i$.  If $a_j$ and $a_k$ are two acceptable agents for $a_i$ where $a_j\prec_i a_k$ then we say that $a_i$ \emph{prefers} $a_j$ to~$a_k$.

Let $M$ be a matching in~$G$. If $a_i a_j\in M$ then we let $M(a_i)$ denote~$a_j$. An edge $a_i a_j \notin M$ \emph{blocks} $M$, or forms a \emph{blocking edge} of $M$, if $a_i$ is unmatched or prefers $a_j$ to $M(a_i)$, and similarly $a_j$ is unmatched or prefers $a_i$ to~$M(a_j)$. A matching is called \emph{stable} if no edge blocks it. Denote by {\sc sr} the special case of {\sc sri} in which $G=K_n$.  Gale and Shapley~\cite{GS62} observed that an instance of {\sc sr} need not admit a stable matching. Irving~\cite{Irv85} gave a linear-time algorithm to find a stable matching or report that none exists, given an instance of {\sc sr}. The straightforward modification of this algorithm to the {\sc sri} case is described in~\cite{GI89}.  We call an {\sc sri} instance \emph{solvable} if it admits a stable matching.

In practice agents may find it difficult to rank a large number of alternatives in strict order of preference.  One natural assumption, therefore, is that preference lists are short, which corresponds to the graph being of bounded degree. Given an integer $d\geq 1$, we define {\sc $d$-sri} to be the restriction of {\sc sri} in which $G$ is of bounded degree~$d$. This special case of {\sc sri} problem has potential applications in organising tournaments. As already pointed out in a paper of Kujansuu et al.~\cite{KLM99}, {\sc sri} can model a pairing process similar to the Swiss system, which is used in large-scale chess competitions. The assumption on short lists is reasonable, because according to the Swiss system, players can be matched only to other players with approximately the same score.

A second variant of {\sc sri}, which can be motivated in a similar fashion, arises if we allow ties in the preference lists, i.e., $\prec_i$ ($1\leq i\leq n$) is now a strict weak ordering. That is, $\prec_i$ is a strict partial order in which incomparability is transitive. We refer to this problem as the \emph{Stable Roommates problem with Ties and Incomplete lists} (\textsc{srti})~\cite{IM02}. As in the {\sc sri} case, define {\sc $d$-srti} to be the restriction of {\sc srti} in which $G$ is of bounded degree~$d$.  Denote by {\sc srt} the special case of {\sc srti} in which $G=K_n$. In the context of the motivating application of chess tournament construction as mentioned in the previous paragraph, {\sc $d$-srti} is naturally obtained if a chess player has several potential partners of the same score and match history in the tournament.

In the {\sc srti} context, ties correspond to indifference in the preference lists. In particular, if $a_i a_j\in E$ and $a_i a_k\in E$ where $a_j\not\prec_i a_k$ and $a_k\not\prec_i a_j$ then $a_i$ is said to be \emph{indifferent between} $a_j$ and~$a_k$.  Thus preference in the {\sc sri} context corresponds to strict preference in the case of {\sc srti}.  Relative to the strict weak orders in $\mathcal O$, we can define stability in {\sc srti} instances in exactly the same way as for {\sc sri}. This means, for example, that if $a_i a_j\in M$ for some matching $M$, and $a_i$ is indifferent between $a_j$ and some agent $a_k$, then $a_i a_k$ cannot block~$M$. The term \emph{solvable} can be defined in the {\sc srti} context in an analogous fashion to {\sc sri}. Using a highly technical reduction from a restriction of \textsc{3-sat}, Ronn~\cite{Ron90} proved that the problem of deciding whether a given {\sc srt} instance is solvable is $\NP$-complete. A simpler reduction was given by Irving and Manlove~\cite{IM02}.  

For solvable instances of \textsc{sri} there can be many stable matchings.  Often it is beneficial to work with a stable matching that is fair to all agents in a precise sense~\cite{Gus87,ILG87}. One such fairness concept can be defined as follows. Given two agents $a_i$, $a_j$ in an instance $\mathcal{I}$ of {\sc sri}, where $a_i a_j\in E$, let $\rank(a_i,a_j)$ denote the rank of $a_j$ in $a_i$'s preference list (that is, 1 plus the number of agents that $a_i$ prefers to $a_j$). Let $A_M$ denote the set of agents who are matched in a given stable matching~$M$. (Note that this set depends only on $\mathcal I$ and is independent of $M$ by \cite[Theorem 4.5.2]{GI89}.) Define $c(M)=\sum_{a_i\in A_M} \rank(a_i,M(a_i))$ to be the \emph{cost} of $M$. An \emph{egalitarian stable matching} is a stable matching $M$ that minimises $c(M)$ over the set of stable matchings in~$\mathcal{I}$. Finding an egalitarian stable matching in \textsc{sr} was shown to be $\NP$-hard by Feder~\cite{Fed92}.  Feder~\cite{Fed92,Fed94} also gave a 2-approximation algorithm for this problem in the {\sc sri} setting. He also showed that an egalitarian stable matching in \textsc{sr} can be approximated within a factor of $\alpha$ of the optimum if and only if Minimum Vertex Cover can be approximated within the same factor~$\alpha$. It was proved later that, assuming the Unique Games Conjecture, Minimum Vertex Cover cannot be approximated within $2 - \varepsilon$ for any $\varepsilon > 0$~\cite{KR08}.

Given an unsolvable instance $\mathcal I$ of {\sc sri} or {\sc srti}, a natural approximation to a stable matching is a \emph{most-stable} matching~\cite{ABM06}.  Relative to a matching $M$ in $\mathcal I$, define $bp(M)$ to be the set of blocking edges of $M$ and let $bp(\mathcal I)$ denote the minimum value of $|bp(M')|$, taken over all matchings $M'$ in~$\mathcal I$.  Then $M$ is a \emph{most-stable} matching in $\mathcal I$ if $|bp(M)|=bp(\mathcal I)$.  The problem of finding a most-stable matching was shown to be $\NP$-hard and not approximable within $n^{k-\varepsilon}$, for any $\varepsilon>0$, unless $\P=\NP$, where $k=\frac{1}{2}$ if $\mathcal I$ is an instance of {\sc sr} and $k=1$ if $\mathcal I$ is an instance of {\sc srt} \cite{ABM06}.

To the best of our knowledge, there has not been any previous work published on either the problem of finding an egalitarian stable matching in a solvable instance of \textsc{sri} with bounded-length preference lists or the solvability of \textsc{srti} with bounded-length preference lists.  This paper provides contributions in both of these directions, focusing on instances of {\sc $d$-sri} and {\sc $d$-srti} for $d\geq 2$, with the aim of drawing the line between polynomial-time solvability and $\NP$-hardness for the associated problems in terms of~$d$.

\paragraph{Our contribution.} In Section~\ref{se:egal} we study the problem of finding an egalitarian stable matching in an instance of {\sc $d$-sri}.  We show that this problem is $\NP$-hard if $d=3$, whilst there is a straightforward algorithm for the case that $d=2$.  We then consider the approximability of this problem for the case that $d\geq 3$.  We give an approximation algorithm with a performance guarantee of $\frac{9}{7}$ for the case that $d=3$, $\frac{11}{7}$ if $d=4$ and $\frac{13}{7}$ if $d=5$. These performance guarantees improve on Feder's 2-approximation algorithm for the general {\sc sri} case~\cite{Fed92,Fed94}. In Section~\ref{se:srt} we turn to {\sc $d$-srti} and prove that the problem of deciding whether an instance of {\sc $3$-srti} is solvable is $\NP$-complete. We then show that the problem of finding a most-stable matching in an instance of {\sc $d$-srti} is solvable in polynomial time if $d=2$, whilst for $d=3$ we show that this problem is $\NP$-hard and not approximable within $n^{1-\varepsilon}$, for any $\varepsilon>0$, unless $\P=\NP$. Due to various complications, as explained in Section~\ref{se:no_egal_srti}, we do not attempt to define and study egalitarian stable matchings in instances of {\sc srti}.  Some open problems are presented in Section \ref{sec:open}.  A structured overview of previous results and our results (marked by~$\ast$) for \textsc{$d$-sri} and \textsc{$d$-srti} is contained in Table~\ref{ta:results}.
\begin{table}[t]
	\centering
    \resizebox{\columnwidth}{!}{
		\begin{tabular}{|c|c|c|}
		\hline
			\phantom{n}& finding a stable matching & finding an egalitarian stable matching \\ \hline
			\begin{tabular}{c} \textsc{$d$-sri} \end{tabular} & \begin{tabular}{c}in $\P$~\cite{Irv85,GI89} \end{tabular}&  \begin{tabular}{c}in $\P$ for $d=2$ ($\ast$) \\ $\NP$-hard even for $d = 3$ ($\ast$) \\ $\frac{2d+3}{7}$-approximation for $d\in \{3,4,5\}$ ($\ast$) \\ 2-approximation for $d\geq 6$ \cite{Fed92,Fed94}\end{tabular}\\  \hline
			 \begin{tabular}{c}\textsc{$d$-srti} \end{tabular} &\begin{tabular}{c} in $\P$ for $d=2$ ($\ast$)\\ $\NP$-hard even for $d = 3$ ($\ast$)\end{tabular} & \begin{tabular}{c}not well-defined (see Section~\ref{se:no_egal_srti}) \end{tabular}\\
		\hline
		\end{tabular}
		}
		\newline
\caption{Summary of results for {\sc $d$-sri} and {\sc $d$-srti}.}
\label{ta:results}
\end{table}

\paragraph{Related work.} Degree-bounded graphs, most-stable matchings and egalitarian stable matchings are widely studied concepts in the literature on matching under preferences \cite{Man13}. As already mentioned, the problem of finding a most-stable matching has been studied previously in the context of {\sc sri}~\cite{ABM06}.  In addition to the results surveyed already, the authors of~\cite{ABM06} gave an $O(m^{k+1})$ algorithm to find a matching $M$ with $|bp(M)|\leq k$ or report that no such matching exists, where $m=|E|$ and $k\geq 1$ is any integer.   Most-stable matchings have also been considered in the context of {\sc $d$-sri}~\cite{BMM12}.  The authors showed that, if $d=3$, there is some constant $c>1$ such that the problem of finding a most-stable matching is not approximable within $c$ unless $\P=\NP$.  On the other hand, they proved that the problem is solvable in polynomial time for $d\leq 2$.  The authors also gave a $(2d-3)$-approximation algorithm for the problem for fixed $d\geq 3$.  This bound was improved to $2d-4$ if the given instance satisfies an additional condition (namely the absence of a structure called an \emph{elitist odd party}).  Most-stable matchings have also been studied in the bipartite restriction of {\sc sri} called the \emph{Stable Marriage problem with Incomplete lists} ({\sc smi}) ~\cite{HIM09,BMM10}.  Since every instance of {\sc smi} admits a stable matching $M$ (and hence $bp(M)=\emptyset$), the focus in~\cite{HIM09,BMM10} was on finding maximum cardinality matchings with the minimum number of blocking edges.

Regarding the problem of finding an egalitarian stable matching in an instance of {\sc sri}, as already mentioned Feder~\cite{Fed92,Fed94} showed that this problem is $\NP$-hard, though approximable within a factor of 2.  A 2-approximation algorithm for this problem was also given independently by Gusfield and Pitt~\cite{GP92}, and by Teo and Sethuraman~\cite{TS98}.  These approximation algorithms can also be extended to the more general setting where we are given a weight function on the edges, and we seek a stable matching of minimum weight. Feder's 2-approximation algorithm requires monotone, non-negative and integral edge weights, whereas with the help of LP techniques~\cite{TS97,TS98}, the integrality constraint can be dropped, while the monotonicity constraint can be partially relaxed. Chen et al.~\cite{CHSY17} study the fixed-parameter tractability of computing egalitarian stable matchings in the setting of {\sc srti}.

\section{The Egalitarian Stable Roommates problem}
\label{se:egal}
In this section we consider the complexity and approximability of the problem of computing an egalitarian stable matching in instances of {\sc $d$-sri}.  We begin by defining the following problems.

\begin{pr}\textsc{egal $d$-sri} \\
	\inp A solvable instance $\mathcal{I} = \langle G,\mathcal O\rangle$ of {\sc $d$-sri}, where $G$ is a graph and $\mathcal O$ is a set of preference lists, each of length at most~$d$.\\
	\outp An egalitarian stable matching $M$ in~$\mathcal I$.
\end{pr}

\noindent
The decision version of {\sc egal $d$-sri} is defined as follows: 

\begin{pr}\textsc{egal $d$-sri dec} \\
	\inp $\mathcal{I} = \langle G,\mathcal O,K'\rangle$, where $\langle G,\mathcal O\rangle$ is a solvable instance $\mathcal I'$ of {\sc $d$-sri} and $K'$ is an integer.\\
	\ques Does $\mathcal I'$ admit a stable matching $M$ with $c(M)\leq K'$?
\end{pr}

\noindent In the following we give a reduction from the $\NP$-complete decision version of Minimum Vertex Cover in cubic graphs to \textsc{egal 3-sri dec}, deriving the hardness of the latter problem.

\begin{theorem}
\label{th:egal}
\textsc{egal 3-sri dec} is $\NP$-complete.
\end{theorem}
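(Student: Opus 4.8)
The plan is to prove membership in $\NP$ and then give the polynomial reduction promised in the text, from the decision version of Minimum Vertex Cover in cubic graphs. Membership is immediate: given a matching $M$, one checks in linear time that no edge blocks $M$ and that $c(M)\le K'$, so a stable matching of cost at most $K'$ is a polynomially verifiable certificate. The substance is the hardness direction. Let $H=(V,F)$ be a cubic graph and $K$ an integer; I would build an instance $\mathcal I'=\langle G,\mathcal O\rangle$ of \textsc{3-sri} together with a bound $K'$ so that $H$ has a vertex cover of size at most $K$ if and only if $\mathcal I'$ has a stable matching $M$ with $c(M)\le K'$.

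The construction is gadget based, with one \emph{vertex gadget} per $v\in V$ and one \emph{edge gadget} per $e\in F$. I would design the vertex gadget for $v$ (whose three incident edges I call $e_1,e_2,e_3$, using that $H$ is cubic) to be a constant-size sub-instance admitting exactly two stable internal configurations: a \emph{cheap} one, to be read as ``$v\notin$ cover'', and an \emph{expensive} one, read as ``$v\in$ cover'', the two differing in cost by a fixed positive increment $\delta$. The key feature I require is that in the expensive configuration the gadget presents, to each of the three interface agents associated with $e_1,e_2,e_3$, a partner they weakly prefer, so that a covered vertex can simultaneously ``satisfy'' all three of its incident edges. Each edge gadget for $e=\{u,v\}$ would then join the interface agent on $u$'s side to the one on $v$'s side through a short forcing structure (a classical odd preference-cycle trap, punctured by the two interface agents) engineered so that the gadget is free of blocking edges \emph{iff} at least one of the two adjacent vertex gadgets is in its expensive state. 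Throughout, I would keep every preference list of length at most $3$: the cubicity of $H$ is used precisely so that each vertex gadget interacts with exactly three edge gadgets, and the ``broadcast to three edges'' must be spread over several internal agents rather than a single one to respect the degree bound.

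With the gadgets in place, the verification splits into the standard three claims. First, $\mathcal I'$ is solvable, as required by the input specification of \textsc{egal 3-sri dec}: putting every vertex in the cover (all vertex gadgets expensive) leaves no edge gadget with a blocking edge, yielding a stable matching; this simultaneously certifies that $\mathcal I'$ is a legal input. Second, every stable matching $M$ of $\mathcal I'$ induces, via which vertex gadgets sit in their expensive configuration, a set $S_M\subseteq V$ that must be a vertex cover of $H$ (an uncovered edge would leave a blocking edge in its gadget), and the cost decomposes as $c(M)=c_0+\delta\,|S_M|$ for a base cost $c_0$. Third, conversely, every vertex cover $S$ of $H$ is realised by a stable matching with $S_M=S$. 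Setting $K'=c_0+\delta K$ then makes the reduction correct, and it is clearly polynomial since each gadget has constant size.

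The main obstacle is the joint gadget design, i.e.\ making the two intended stable configurations of each vertex gadget the \emph{only} ones and ensuring that gluing the gadgets together introduces no spurious stable matchings and never destroys solvability, all under the hard constraint that no list exceeds length $3$. In particular, propagating a single binary choice at $v$ to all three incident edges cannot be done by one degree-$3$ agent, so the internal wiring of the vertex gadget, and the exact preference orders that force the edge gadget to block precisely when both endpoints are cheap, are the delicate part; once the local stable configurations of each gadget are pinned down, the global correspondence between stable matchings and vertex covers and the linear cost accounting follow routinely.
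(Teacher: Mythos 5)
Your high-level plan coincides exactly with the paper's: membership in $\NP$ is immediate, the reduction is from vertex cover in cubic graphs, each vertex of $H$ gets a constant-size gadget with two stable configurations (``in cover'' being the more expensive by a fixed increment $\delta$, and being the one in which the three interface agents are happy enough not to block), each edge gets a gadget that is blocking-free precisely when at least one endpoint gadget is in its expensive state, the cost decomposes as $c_0+\delta|S_M|$, and $K'=c_0+\delta K$. The problem is that this is a specification of the reduction, not a proof: you never exhibit the gadgets, and you yourself flag their design as ``the main obstacle'' and ``the delicate part'' without resolving it. All of the mathematical content of the theorem lives exactly there --- in showing that such gadgets \emph{exist} under the length-$3$ constraint, that each admits \emph{only} the two intended stable configurations, and that gluing them together creates no spurious stable matchings. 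As written, the proposal cannot be checked for correctness because the object it reasons about is not defined; in particular your claim that ``once the local stable configurations are pinned down, the rest follows routinely'' presupposes the part that has not been done.

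For comparison, the paper's gadgets are as follows. The vertex gadget for $v_i$ is an $8$-cycle on agents $v_i^1,\dots,v_i^4,w_i^1,\dots,w_i^4$ in which $v_i^r$ ranks $w_i^r$ first and $w_i^{r+1}$ third (indices mod $4$), with pendant agents $z_i^r$ padding the $w$-lists; its only two stable configurations are $V_i^c=\{v_i^rw_i^r\}$ (cost $16$, read as $v_i\in C$, every $v_i^r$ at rank $1$) and $V_i^u=\{v_i^rw_i^{r+1}\}$ (cost $15$, read as $v_i\notin C$), so $\delta=1$. Note that a fourth, dummy agent $v_i^4$ is needed even though $v_i$ has only three incident edges --- an odd cycle would not give two perfect matchings --- which is one of the concrete wrinkles your ``spread over several internal agents'' remark glosses over. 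The edge gadget for $e_j$ is a $4$-cycle on $e_j^1,\dots,e_j^4$ whose two perfect matchings $E_j^1,E_j^2$ are both internally stable and both cost $7$; in $E_j^1$ the agent $e_j^2$ sits at rank $3$ and blocks with its second choice $v(e_j^2)$ unless $v_{j,2}\in C$, and symmetrically for $E_j^2$ and $v_{j,1}$, which is how ``covered by at least one endpoint'' is enforced. The interconnection is a single edge per literal occurrence, ranked second on both sides. Until you produce gadgets with these (or equivalent) verified properties, the argument is a plausible but unproven reduction template.
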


\begin{proof}
Clearly {\sc egal 3-sri dec} belongs to $\NP$.  To show $\NP$-hardness, we begin by defining the $\NP$-complete problem that we will reduce to \textsc{egal 3-sri dec}.

\begin{pr}\textsc{3-vc}\ \\
	\inp $\mathcal{I} = \langle G, K \rangle$, where $G$ is a cubic graph and $K$ is an integer.\\
	\ques Does $G$ contain a vertex cover of size at most $K$?
\end{pr}

\noindent
\textsc{3-vc} is $\NP$-complete~\cite{GJS76,MS77}. \\

\noindent
\textbf{Construction of the} \textsc{egal 3-sri dec} \textbf{instance.} 
Let $\langle G,K\rangle$ be an instance of \textsc{3-vc}, where $G=(V,E)$, $E=\{e_1,\dots,e_m\}$ and $V=\{v_1,\dots,v_n\}$. 
For each $i$ $(1\leq i\leq n)$, suppose that $v_i$ is incident to edges $e_{j_1}$, $e_{j_2}$ and $e_{j_3}$ in $G$,
where without loss of generality $j_1<j_2<j_3$.  Define $e_{i,s}=e_{j_s}$ ($1\leq s\leq 3)$.  Similarly 
for each $j$ $(1\leq j\leq m)$, suppose that $e_j=v_{i_1} v_{i_2}$, where without loss of generality $i_1<i_2$.
Define $v_{j,r}=v_{i_r}$ ($1\leq r\leq 2)$. The use of this notation is illustrated in Figure~\ref{fi:egal1}.

			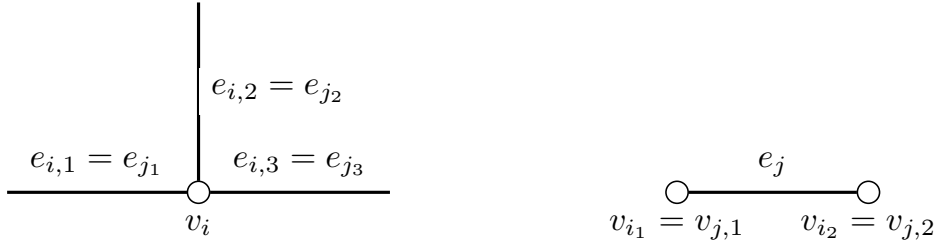
\begin{figure}[t]
			\centering
            \resizebox{0.8\columnwidth}{!}{
			\begin{tikzpicture}[scale=0.85, transform shape]
			\pgfmathsetmacro{\d}{2}

			\node[vertex, label=below:$v_i$] (vi) at (0, 0) {};
			
			\draw [thick] (vi) -- node [above=1pt, fill=white]{$e_{i,1} = e_{j_1}$}($(vi) - (\d, 0)$);
			\draw [thick] (vi)  -- node [above=1pt, fill=white]{$e_{i,3} = e_{j_3}$} ($(vi) + (\d, 0)$);
			\draw [thick] (vi)  -- node [right, fill=white]{$e_{i,2} = e_{j_2}$} ($(vi) + (0, \d)$);

			\node[vertex, label={below:$v_{i_1} = v_{j,1}$}] (vi1) at (5, 0) {};
			\node[vertex, label={below:$v_{i_2} = v_{j,2}$}] (vi2) at (7, 0) {};
			
			\draw [thick] (vi1) -- node [above=1pt, fill=white]{$e_j$}(vi2);
			\end{tikzpicture}
            }
			\caption{Notation derived from the \textsc{3-vc} instance $\langle G, K \rangle$.}
			\label{fi:egal1}
		\end{figure} 

We now construct an instance $\mathcal{I}$ of \textsc{3-sri} as follows. We define the following sets of vertices.
\begin{center}
\begin{math}
\begin{array}{lllllll}
V'&=&\{v_i^r &:& 1\leq i\leq n &\wedge& 1\leq r\leq 4\}\\
E'&=&\{e_j^s &:& 1\leq j\leq m &\wedge& 1\leq s\leq 2\}\\
W&=&\{w_i^r &:& 1\leq i\leq n &\wedge& 1\leq r\leq 4\}\\
Z&=&\{z_i^r &:& 1\leq i\leq n &\wedge& 1\leq r\leq 4\}
\end{array}
\end{math}
\end{center}

Intuitively, $v_i^r\in V'$ corresponds to vertex $v_i$ and its incident edge $e_{i,r}$, whilst $e_j^s\in E'$ corresponds to edge $e_j$ and its incident vertex~$v_{j,s}$.  The set $V'\cup E'\cup W\cup Z$ constitutes the set of agents in $\mathcal{I}$, and the preference lists of the agents are as shown in Figure~\ref{preflists}. In the preference list of an agent $v_i^r$ ($1\leq i\leq n$ and $1\leq r\leq 3$), the symbol $e(v_i^r)$ denotes the agent $e_j^s\in E'$ such that $e_j=e_{i,r}$ and $v_i=v_{j,s}$ (that is, $e_j$ is the $r$th edge incident to $v_i$ and $v_i$ is the $s$th end-vertex of $e_j$).
Similarly in the preference list of an agent $e_j^s$ ($1\leq i\leq m$ 
and $1\leq s\leq 2$), the symbol $v(e_j^s)$ denotes the agent $v_i^r\in V'$ such that $v_i=v_{j,s}$ and $e_j=e_{i,r}$ (that is, $v_i$ is the $s$th end-vertex of $e_j$ and $e_j$ is the $r$th edge incident to $v_i$). 

		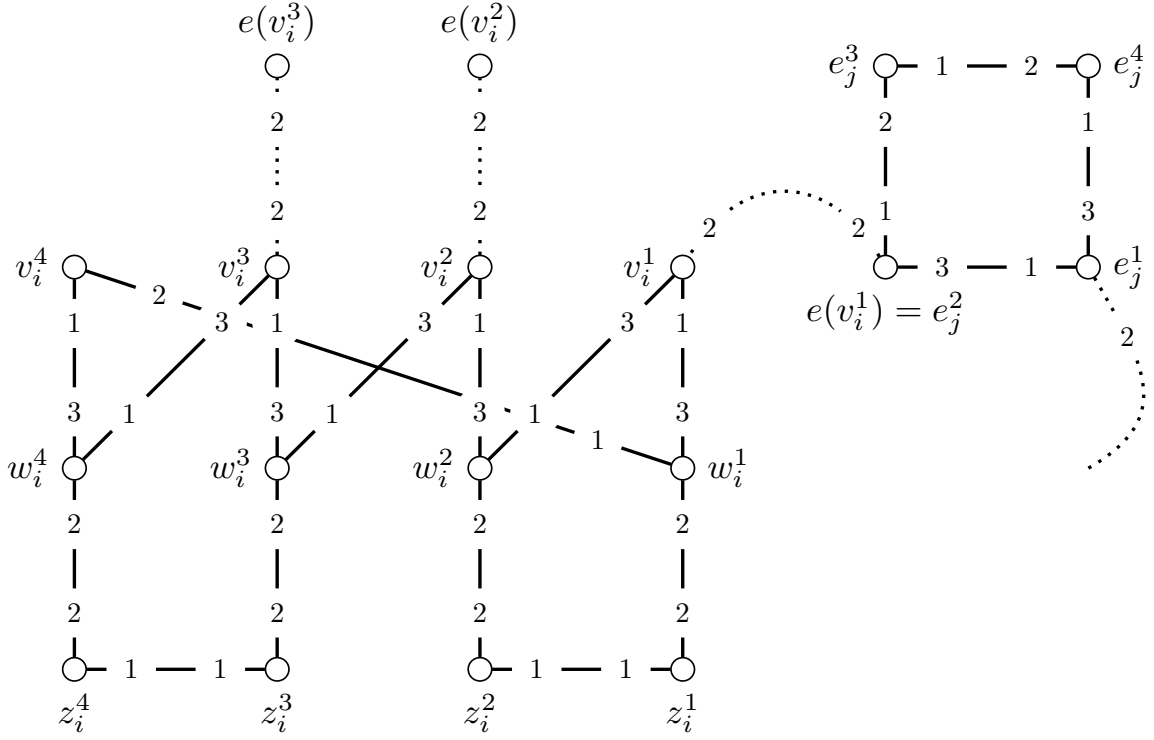
\begin{figure}[t]
			\centering
            \resizebox{\columnwidth}{!}{
			\begin{tikzpicture}[scale=0.85, transform shape]
			\pgfmathsetmacro{\d}{2}

			\node[vertex, label=below:$z_i^4$] (zi4) at (\d*3, 0) {};
			\node[vertex, label=below:$z_i^3$] (zi3) at (\d*4, 0) {};
			\node[vertex, label=below:$z_i^2$] (zi2) at (\d*5, 0) {};
			\node[vertex, label=below:$z_i^1$] (zi1) at (\d*6, 0) {};
			
			\node[vertex, label=left:$w_i^4$] (wi4) at (\d*3, \d) {};
			\node[vertex, label=left:$w_i^3$] (wi3) at (\d*4, \d) {};
			\node[vertex, label=left:$w_i^2$] (wi2) at (\d*5, \d) {};
			\node[vertex, label=right:$w_i^1$] (wi1) at (\d*6, \d) {};
			
			\node[vertex, label=left:$v_i^4$] (vi4) at (\d*3, \d*2) {};
			\node[vertex, label=left:$v_i^3$] (vi3) at (\d*4, \d*2) {};
			\node[vertex, label=left:$v_i^2$] (vi2) at (\d*5, \d*2) {};
			\node[vertex, label=left:$v_i^1$] (vi1) at (\d*6, \d*2) {};
			
			\node[vertex, label=above:$e(v_i^3)$] (evi3) at (\d*4, \d*3) {};
			\node[vertex, label=above:$e(v_i^2)$] (evi2) at (\d*5, \d*3) {};
			
			\draw [thick] (wi1) -- node[edgelabel, very near start] {1} node[edgelabel, very near end] {2} (vi4);
			
			\draw [thick] (vi4) -- node[edgelabel, near start] {1} node[edgelabel, near end] {3} (wi4);
			\draw [thick] (vi3) -- node[edgelabel, near start] {1} node[edgelabel, near end] {3} (wi3);
			\draw [thick] (vi2) -- node[edgelabel, near start] {1} node[edgelabel, near end] {3} (wi2);
			\draw [thick] (vi1) -- node[edgelabel, near start] {1} node[edgelabel, near end] {3} (wi1);
			
			\draw [thick] (zi4) -- node[edgelabel, near start] {2} node[edgelabel, near end] {2} (wi4);
			\draw [thick] (zi3) -- node[edgelabel, near start] {2} node[edgelabel, near end] {2} (wi3);
			\draw [thick] (zi2) -- node[edgelabel, near start] {2} node[edgelabel, near end] {2} (wi2);
			\draw [thick] (zi1) -- node[edgelabel, near start] {2} node[edgelabel, near end] {2} (wi1);
			
			\draw [thick] (zi1) -- node[edgelabel, near start] {1} node[edgelabel, near end] {1} (zi2);
			\draw [thick] (zi3) -- node[edgelabel, near start] {1} node[edgelabel, near end] {1} (zi4);
			
			\draw [thick, dotted] (vi2) -- node[edgelabel, near start] {2} node[edgelabel, near end] {2} (evi2);
			\draw [thick, dotted] (vi3) -- node[edgelabel, near start] {2} node[edgelabel, near end] {2} (evi3);
			\draw [thick] (vi1) -- node[edgelabel, near start] {3} node[edgelabel, near end] {1} (wi2);
			\draw [thick] (vi2) -- node[edgelabel, near start] {3} node[edgelabel, near end] {1} (wi3);
			\draw [thick] (vi3) -- node[edgelabel, near start] {3} node[edgelabel, near end] {1} (wi4);
			
			
			\node[vertex, label=left:$e_j^3$] (ej3) at (\d*7,\d*3) {};
			\node[vertex, label=right:$e_j^4$] (ej4) at (\d*8,\d*3) {};
			\node[vertex, label=right:$e_j^1$] (ej1) at (\d*8,\d*2) {};
			\node[vertex, label={below:$e(v_i^1) = e_j^2$}] (ej2) at (\d*7,\d*2) {};
			
			\draw [thick] (ej3) -- node[edgelabel, near start] {1} node[edgelabel, near end] {2} (ej4);
			\draw [thick] (ej4) -- node[edgelabel, near start] {1} node[edgelabel, near end] {3} (ej1);
			\draw [thick] (ej1) -- node[edgelabel, near start] {1} node[edgelabel, near end] {3} (ej2);
			\draw [thick] (ej2) -- node[edgelabel, near start] {1} node[edgelabel, near end] {2} (ej3);
			
			\draw [thick, dotted] (ej2) to[out=120,in=60, distance=1cm ]  node[edgelabel, very near start] {2} node[edgelabel, very near end] {2} (vi1);
						
			\draw [thick, dotted] (ej1) to[out=-60,in=30, distance=1cm ]  node[edgelabel, near start] {2} (\d*8, \d);
			\end{tikzpicture}
            }
			\caption{Part of the constructed instance of {\sc egal 3-sri dec}.}
			\label{preflists}
		\end{figure}

Finally we define some further notation in~$\mathcal{I}$. Let $K'=7m+19n+K$. The following edge sets play a particular role in our proof. Addition is taken modulo~4 here.

\begin{center}
\begin{math}
\begin{array}{lllll}
V_i^c&=&\{v_i^r w_i^r &:& 1\leq i\leq n\wedge 1\leq r\leq 4\}\\
V_i^u&=&\{v_i^r w_i^{r+1} &:&  1\leq i\leq n\wedge 1\leq r\leq 4\}\\
E_j^1&=&\{e_j^1 e_j^2, e_j^3 e_j^4 &:& 1\leq j\leq m\}\\
E_j^2&=&\{e_j^1 e_j^4, e_j^2 e_j^3 &:& 1\leq j\leq m\}\\
M_Z&=&\{z_i^1 z_i^2, z_i^3 z_i^4 &:& 1\leq i\leq n\}
\end{array}
\end{math}
\end{center}

This finishes the construction of the \textsc{egal 3-sri dec} instance~$\mathcal{I}$. In the remainder of the proof we show that $G$ has a vertex cover $C$ where $|C|\leq K$ if and only if $\mathcal{I}$ has a stable matching $M$ where $c(M)\leq K'$.

\begin{cl}
	If $G$ has a vertex cover $C$ such that $|C|=k\leq K$, then there is a stable matching $M$ in $\mathcal{I}$ such that $c(M)\leq K'$.
\end{cl}
\proof Suppose that $G$ has a vertex cover $C$ such that $|C|=k\leq K$.  We construct a matching $M$ in $\mathcal{I}$ as follows.  
For each $i$ ($1\leq i\leq n$), if $v_i\in C$, add $V_i^c$ to $M$, otherwise add $V_i^u$ to $M$. For each $j$ $(1\leq j\leq m$), if $v_{j,1}\in C$, add $E_j^2$ to $M$, otherwise add $E_j^1$ to~$M$.  Finally add the pairs in $M_Z$ to~$M$.

We now argue that $M$ is stable. Suppose that $e_j^1 e_j^4\in M$ for some $j$ ($1\leq j\leq m$).  Then $E_j^2\subseteq M$, so $v_{j,1}\in C$.  Let $v_i=v_{j,1}$.  Then by construction, $V_i^c\subseteq M$, and hence $v_i^r$ has his first choice for each $r$ ($1\leq r\leq 4$). Thus $e_j^1$ does not form a blocking edge of $M$ with $v(e_j^1)$.  The argument is similar if $e_j^1 e_j^2\in M$ for some $j$ ($1\leq j\leq m$).  Then $E_j^1\subseteq M$, so $v_{j,2}\in C$.  Let $v_i=v_{j,2}$.  Then by construction, $V_i^c\subseteq M$, and hence $v_i^r$ has his first choice for each $r$ ($1\leq r\leq 4$). Thus $e_j^2$ does not form a blocking edge of $M$ with~$v(e_j^1)$.  Now suppose that $v_i^r w_i^{r+1}\in M$ for some $i$ ($1\leq i\leq n$) and $r$ ($1\leq r\leq 3$).  Then $V_i^u\subseteq M$, so $v_i\notin C$. Let $e_j^s=e(v_i^r)$.  If $s=1$ then $v_i=v_{j,1}$.  Hence by construction of $M$, $E_j^1\subseteq M$.  Then $e_j^1$ has his first-choice partner, so $v_i^r$ does not block $M$ with~$e(v_i^r)$.  If $s=2$ then $v_i=v_{j,2}$.  As $v_{j,2}\notin C$, it follows that $v_{j,1}\in C$ as $C$ is a vertex cover.  Hence by construction of $M$, $E_j^2\subseteq M$.  Then $e_j^2$ has its first-choice partner, so $v_i^r$ does not block $M$ with~$e(v_i^r)$.  It is straightforward to verify that $M$ cannot admit any other type of blocking edge, and thus $M$ is stable in~$\mathcal{I}$. 

Clearly every agent in $\mathcal{I}$ is matched in~$M$.  We note that Theorem~4.5.2 of~\cite{GI89} implies that every stable matching in $\mathcal{I}$ matches every agent in~$\mathcal{I}$ -- we will use this fact in the next claim. We finally note that $c(M)=4k+12k+9(n-k)+2(n-k)+4(n-k)+7m+4n=7m+19n+k\leq K'$, considering the contributions from the agents matched in 
$V_i^c$, $V_i^u$ ($1\leq i\leq n$), $E_j^1$, $E_j^2$ ($1\leq j\leq m$) and $M_Z$ respectively. \qed

\begin{cl}
If there is a stable matching $M$ in $\mathcal{I}$ such that $c(M)\leq K'$ then $G$ has a vertex cover $C$ such that $|C|=k\leq K$.
\end{cl}

\proof
Suppose that $M$ is a stable matching in $\mathcal{I}$ such that $c(M)\leq K'$.  We construct a set of vertices $C$ in $G$ as follows. As $M$ matches every agent in $\mathcal{I}$, then for each $i$ $(1\leq i\leq n$), either $V_i^c\subseteq M$ or $V_i^u\subseteq M$.  In the former case add $v_i$ to~$C$. Also, for each $j$ ($1\leq j\leq m$), as $M$ matches every agent in $\mathcal{I}$, either
$E_j^1\subseteq M$ or $E_j^2\subseteq M$.  Finally, it follows that $M_Z\subseteq M$.  

We now argue that $C$ is a vertex cover. Let $j$ ($1\leq j\leq m$) be given and suppose that $v_{j,1}\notin C$ and $v_{j,2}\notin C$.  Suppose firstly that $E_j^1\subseteq M$. Let $v_i=v_{j,2}$. Then $V_i^u\subseteq M$ by construction of $C$, so that $e_j^2$ blocks $M$ with $v(e_j^2)$, a contradiction.  Now suppose that $E_j^2\subseteq M$. Let $v_i=v_{j,1}$. Then $V_i^u\subseteq M$ by construction of $C$, so that $e_j^1$ blocks $M$ with $v(e_j^1)$, a contradiction.  Hence $C$ is a vertex cover in~$G$.

Moreover if $k=|C|$ then given the composition of $M$, as noted in the previous claim, $c(M)=7m+19n+k$, and since $c(M)\leq K'$ it follows that $k\leq K$. 
\end{proof}

\noindent  Theorem \ref{th:egal} immediately implies the following result.

\begin{cor}
\label{co:egal}
\textsc{egal 3-sri} is $\NP$-hard.
\end{cor}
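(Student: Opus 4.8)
The plan is to derive the $\NP$-hardness of the optimization problem \textsc{egal 3-sri} directly from the $\NP$-completeness of its decision version \textsc{egal 3-sri dec}, established in Theorem~\ref{th:egal}, via a trivial Turing reduction. The key observation is that an egalitarian stable matching, by definition, attains the minimum cost over all stable matchings in the given instance. Hence a single call to an algorithm solving \textsc{egal 3-sri} suffices to answer any threshold query of the form posed by \textsc{egal 3-sri dec}.

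Concretely, I would argue as follows. Suppose, for contradiction, that \textsc{egal 3-sri} admits a polynomial-time algorithm $\mathcal{A}$. Given an arbitrary instance $\langle G, \mathcal O, K'\rangle$ of \textsc{egal 3-sri dec} (so that $\langle G, \mathcal O\rangle$ is a solvable instance of \textsc{3-sri}), I would run $\mathcal{A}$ on $\langle G, \mathcal O\rangle$ to obtain an egalitarian stable matching $M$, then compute $c(M) = \sum_{a_i \in A_M} \rank(a_i, M(a_i))$ and answer \true{} if and only if $c(M) \le K'$. The correctness is immediate: since $c(M)$ is the minimum cost over all stable matchings in the instance, there exists a stable matching of cost at most $K'$ if and only if $c(M) \le K'$. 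Computing $c(M)$ requires only summing ranks along the matched edges, which is clearly polynomial, so the overall procedure runs in polynomial time and decides \textsc{egal 3-sri dec}.

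This would place the $\NP$-complete problem \textsc{egal 3-sri dec} in $\P$, forcing $\P = \NP$. Therefore no such algorithm $\mathcal{A}$ can exist unless $\P = \NP$, which is precisely the assertion that \textsc{egal 3-sri} is $\NP$-hard. There is no genuine obstacle here: the only point worth stating explicitly is that an optimal (egalitarian) solution certifies the exact minimum cost, so a solver for the optimization problem immediately resolves the decision problem's threshold question; everything else is routine bookkeeping about polynomial running time.
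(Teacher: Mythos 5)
Your argument is correct and is exactly the reasoning the paper leaves implicit when it states that Theorem~\ref{th:egal} ``immediately implies'' the corollary: a polynomial-time solver for the optimization problem answers the threshold question of \textsc{egal 3-sri dec} by one call plus a cost computation. You have simply spelled out the routine Turing reduction that the paper omits, so there is nothing to add.
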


We remark that \textsc{egal 2-sri} is trivially solvable in polynomial time: the components of the graph are paths and cycles in this case, and the cost of a stable matching selected in one component is not affected by the matching edges chosen in another component.  Therefore we can deal with each path and cycle separately, minimising the cost of a stable matching in each.  Paths and odd cycles admit exactly one stable matching (recall that (i) the instance is assumed to be solvable, and (ii) the set of matched agents is the same in all stable matchings \cite[Theorem 4.5.2]{GI89}), whilst even cycles admit at most two stable matchings (to find them, test each of the two perfect matchings for stability) -- we can just pick the stable matching with lower cost in such a case.  The following result is therefore immediate.

\begin{prop}
\textsc{egal 2-sri} admits a linear-time algorithm.
\end{prop}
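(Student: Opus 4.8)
The plan is to exploit the observation, already noted in the surrounding text, that an instance of \textsc{egal 2-sri} decomposes into connected components, and that the cost of a stable matching is additive across components. First I would run a linear-time traversal (e.g.\ depth-first search) of $G$ to identify its connected components; since each vertex has degree at most~$2$, every component is either a simple path or a simple cycle, and the total work is $O(n+m)=O(n)$. I would then process each component independently, computing a minimum-cost stable matching restricted to that component, and finally take the union of these local solutions. Additivity of $c(M)$ across components guarantees that the union is a globally egalitarian stable matching, so correctness reduces to solving each component optimally.

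The core of the argument is the per-component case analysis. For a \emph{path} component, I would argue that the straightforward {\sc sri} algorithm of \cite{Irv85,GI89} yields the unique stable matching; since the instance is assumed solvable and, by \cite[Theorem 4.5.2]{GI89}, the set of matched agents is invariant across stable matchings, uniqueness on a path follows from a simple induction from the endpoints inward. The same uniqueness argument applies to \emph{odd cycles}. For an \emph{even cycle} of length $2\ell$, there are at most two candidate perfect matchings (the two ways of alternating edges around the cycle); I would test each for stability directly by scanning all edges once for blocking pairs, discard any unstable candidate, and among the stable survivors select the one of smaller cost. Each such test and cost computation is linear in the size of the component, so the per-component cost is $O(1)$ per edge.

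Summing over all components, the total running time is linear in $n+m$, and since $d=2$ forces $m\leq n$, this is $O(n)$ overall, establishing the claimed linear-time bound. The main obstacle, though a mild one, is verifying the uniqueness and ``at most two'' stable-matching claims rigorously: one must confirm that a path or odd cycle admits exactly one stable matching while an even cycle admits at most two, and that testing the two perfect matchings of an even cycle genuinely suffices (i.e.\ no stable matching on an even cycle can leave a vertex unmatched, which again follows from the invariance of the matched-agent set together with solvability). Once these structural facts are in hand, the additivity of cost across components makes the correctness of the component-wise optimisation immediate, and the linear running time is a direct consequence of the single traversal plus a linear scan within each component.
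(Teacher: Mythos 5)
Your proposal follows essentially the same route as the paper's own argument: decompose the degree-$2$ graph into path and cycle components, use the additivity of cost and the invariance of the matched-agent set (Theorem~4.5.2 of \cite{GI89}) to reduce to per-component optimisation, note that paths and odd cycles admit a unique stable matching while even cycles admit at most two, and test the candidates in linear time. The correctness reasoning and the complexity accounting match the paper's, so nothing further is needed.
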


Corollary~\ref{co:egal} naturally leads to the question of the approximabilty of \textsc{egal $d$-sri}. As mentioned in the Introduction, Feder~\cite{Fed92,Fed94} provided a 2-approximation algorithm for the problem of finding an egalitarian stable matching in an instance of {\sc sri}. As Theorems~\ref{th:egal3}, \ref{th:egal4} and \ref{th:egal5} show, this bound can be improved for instances with bounded-length preference lists.

\begin{theorem}
	\label{th:egal3}
	\textsc{egal 3-sri} is approximable within $9/7$.
\end{theorem}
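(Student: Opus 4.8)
The plan is to produce, in polynomial time, a stable matching $M$ whose cost exceeds the egalitarian optimum $c(\mopt)$ only by a small additive term, and then to absorb that loss into the trivial lower bound. Recall the two elementary bounds available when $d=3$: since every stable matching matches exactly the same agent set $A_M$ (by \cite[Theorem 4.5.2]{GI89}) and each matched agent is given a partner of rank in $\{1,2,3\}$, every stable matching $M'$ satisfies $|A_M|\le c(M')\le 3|A_M|$; in particular $c(\mopt)\ge |A_M|$. The target ratio rewrites as $9/7 = 1+\tfrac{2}{7}$, which suggests aiming for the sharper additive statement $c(M)\le c(\mopt)+\tfrac{2}{7}|A_M|$: combined with $|A_M|\le c(\mopt)$ this immediately gives $c(M)\le \tfrac{9}{7}c(\mopt)$. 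The whole proof thus reduces to constructing $M$ and proving this additive bound, and note that the plain bound $c(\mopt)\ge|A_M|$ alone cannot suffice, since an instance may force rank $2$ on every agent, so a genuine lower bound tracking $c(\mopt)$ is needed.

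To obtain such a computable lower bound I would use the structural description of the stable matchings of an {\sc sri} instance in terms of its rotation poset \cite{GI89}: the stable matchings correspond to the closed subsets of this poset that are in addition consistent with the non-bipartite duality constraints, and $c(\cdot)$ is an affine function of the chosen rotations. Dropping the duality constraints leaves a minimum-weight closure problem, solvable in polynomial time by a single max-flow computation, whose optimum $L$ is a valid lower bound $L\le c(\mopt)$ because it optimises over a superset of the stable matchings. These duality constraints are exactly the source of the $\NP$-hardness established in Theorem~\ref{th:egal}, so they cannot simply be discarded; instead the algorithm repairs the relaxed optimum into a genuine stable matching.

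The heart of the argument is this repair-and-charge step. Starting from the closed set attaining $L$, I would convert it into a consistent (hence stable) matching $M$ through a sequence of local modifications, each resolving one violated duality constraint. The hypothesis $d=3$ is essential here: each rotation shifts every participating agent's partner by a rank difference of at most $d-1=2$, and each agent lies on a bounded number of rotations, so both the rotation weights and the conflict structure induced by the duality constraints have bounded degree. This permits a charging scheme in which the extra cost incurred by each local repair is backed by a block of matched agents whose contribution to $L$ is guaranteed to be large; quantitatively, the goal is to show that the total overpayment, summed over all repairs, never exceeds $\tfrac{2}{7}$ per matched agent, i.e. $c(M)-L\le \tfrac{2}{7}|A_M|$, each unit of overpayment being covered by at least $\tfrac{7}{2}$ units of unavoidable cost. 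For general $d$ the same scheme is intended to yield overpayment $\tfrac{2(d-2)}{7}|A_M|$, reproducing $1+\tfrac{2(d-2)}{7}=\tfrac{2d+3}{7}$ and explaining why the method degrades with $d$ and is useful only for $d\le 5$.

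I expect the charging analysis to be the main obstacle. Exhibiting \emph{some} constant strictly below $2$, in the spirit of bounded-degree vertex-cover bounds, should be comparatively routine; pinning the constant to exactly $\tfrac{2}{7}$ per agent for $d=3$ requires a complete enumeration of the local conflict configurations compatible with degree $3$ and a verification that in each one the guaranteed backing is at least $\tfrac{7}{2}$ times the overpayment. Additional care is needed to confirm that every repair can be performed while keeping the chosen set closed, so that $M$ remains a genuine stable matching, and that distinct repairs do not interfere, so that the per-configuration inequalities sum to the claimed global bound.
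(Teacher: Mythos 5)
There is a genuine gap here, and in fact two distinct problems. First, your opening reduction sets a target that is stronger than the theorem and probably unattainable. You pair the weak lower bound $c(\mopt)\ge|A_M|$ with the additive goal $c(M)\le c(\mopt)+\tfrac{2}{7}|A_M|$. But in the regime that actually forces the ratio towards $9/7$ --- where $\megal$ consists largely of $(1,2)$-pairs, so $c(\megal)\approx 1.5|A_M|$ and a bad stable matching of $(1,3)$-pairs costs $2|A_M|$ --- the additive gap that must be tolerated is up to $0.5|A_M|$, well above $\tfrac{2}{7}|A_M|$; the published algorithm does not satisfy your additive inequality either, and nothing suggests any polynomial-time algorithm does. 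The correct accounting is multiplicative and case-dependent: after fixing $(1,1)$-pairs and eliminating $(3,3)$-pairs one gets $c(\megal)\ge 3.5|M|$ (i.e.\ $1.75|A_M|$), and the slack one may afford grows with the number of non-$(1,2)$-pairs in $\megal$, precisely because each such pair already contributes at least $4$ to the lower bound. By fixing the lower bound at $|A_M|$ you have thrown away the term that makes the ratio close.

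Second, the heart of your argument --- the repair-and-charge step converting the min-weight closed set $L$ into a stable matching at additive cost $\tfrac{2}{7}|A_M|$ --- is announced rather than proved, and the obstacles you yourself flag are exactly where it breaks. A closed set violating a duality pair cannot be repaired locally: including the missing rotation of a pair forces including all of its predecessors to restore closedness, which may create new violated pairs, and no bound on the length or cost of these cascades is given; nor is it shown that the object obtained before completeness even corresponds to a matching whose cost the affine function measures. Feder's equivalence with vertex-cover approximation is a warning that any better-than-$2$ guarantee must exploit the degree bound in a very specific way. The paper's route does this quite differently: it first shows any stable matching is a $4/3$-approximation by enumerating $(i,j)$-pairs, observes that the only bad configuration is a $(1,2)$-pair of $\megal$ degrading to a $(1,3)$-pair, and then defines the $0/1$ weight function $w(uv)=0$ iff $uv$ is a $(1,2)$-pair --- which is U-shaped and hence admissible for Teo and Sethuraman's $2$-approximation for minimum-weight stable matching --- to guarantee the output contains at least $2|\megal^{(1,2)}|-|M|$ many $(1,2)$-pairs; a two-case count then yields $9/7$. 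None of that machinery (the $4/3$ baseline, the identification of $(1,2)$-pairs as the sole bottleneck, the weighted stable matching subroutine) appears in your proposal, and without a worked-out charging lemma the proposal does not establish the theorem.
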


\begin{proof}
Let $\mathcal I$ be an instance of {\sc 3-sri} and let $\megal$ denote an egalitarian stable matching in~$\mathcal I$.  First we show that any stable matching in $\mathcal I$ is a $4/3$-approximation to~$\megal$. We then focus on the worst-case scenario when this ratio $4/3$ is in fact realised. Then we design a weight function on the edges of the graph and apply Teo and Sethuraman's 2-approximation algorithm~\cite{TS97,TS98} to find an approximate solution $M'$ to a minimum weight stable matching $M_{opt}$ for this weight function.  This weight function helps $M'$ to avoid the worst case for the $4/3$-approximation for a significant amount of the matching edges.  We will ultimately show that $M'$ is in fact a $9/7$-approximation to~$\megal$.

\begin{cl}
\label{cl:43}
In an instance of \textsc{egal 3-sri}, any stable matching approximates $c(\megal)$ within a factor of~$4/3$.
\end{cl}

\begin{proof}
Let $M$ be an arbitrary stable matching in~$\mathcal I$. Call an edge $uv$ an \emph{$(i,j)$-pair} $(i\leq j)$ if $v$ is $u$'s $i$th choice and $u$ is $v$'s $j$th choice. By Theorem~4.5.2 of~\cite{GI89}, the set of agents matched in $\megal$ is identical to the set of agents matched in~$M$. We will now study the worst approximation ratios in all cases of $(i,j)$-pairs, given that $1 \leq i \leq j \leq 3$ in \textsc{3-sri}.

\begin{itemize}

\item If $uv\in \megal$ is a $(1,1)$-pair then $u$ and $v$ contribute 2 to $c(\megal)$ and also 2 to $c(M)$ since they must be also be matched in $M$ (and in every stable matching).

\item If $uv\in \megal$ is a $(1,2)$-pair then $u$ and $v$ contribute 3 to $c(\megal)$ and at most 4 to~$c(M)$. Since, if $uv \notin M$, then $v$ must be matched to his 1st choice and $u$ to his 2nd or 3rd, because one of $u$ and $v$ must be better off and the other must be worse off in $M$ than in~$\megal$.

\item If $uv\in \megal$ is a $(1,3)$-pair then $u$ and $v$ contribute 4 to $c(\megal)$ and at most 5 to~$c(M)$. Since, if $uv \notin M$, then $v$ must be matched to his 1st or 2nd choice and $u$ to his 2nd or 3rd.

\item If $uv\in \megal$ is a $(2,2)$-pair then $u$ and $v$ contribute 4 to $c(\megal)$ and at most 4 to~$c(M)$. Since, if $uv \notin M$, then one must be matched to his 1st choice and the other to his 3rd.

\item If $uv\in \megal$ is a $(2,3)$-pair then $u$ and $v$ contribute 5 to $c(\megal)$ and at most 5 to~$c(M)$. Since, if $uv \notin M$, then $v$ must be matched to his 1st or 2nd choice and $u$ to his 3rd.

\item If $uv\in \megal$ is a (3,3)-pair then $u$ and $v$ contribute 6 to $c(\megal)$ and also 6 to $c(M)$ since they must be also be matched in $M$ (and in every stable matching -- this follows by \cite[Lemma 4.3.9]{GI89}).
\end{itemize}

It follows that, for every pair $uv\in \megal$,
\begin{eqnarray*}
\frac{\rank(u,M(u))+\rank(v,M(v))}{\rank(u,\megal(u))+\rank(v,\megal(v))} & = & \frac{\rank(u,M(u))+\rank(v,M(v))}{\rank(u,v)+\rank(v,u)}\\ 
& \leq & 4/3.
\end{eqnarray*}

Hence $c(M) / c(\megal) \leq 4/3$ and Claim~\ref{cl:43} is proved.
\end{proof}

As shown in Claim~\ref{cl:43}, the only case when the approximation ratio $4/3$ is reached is where $\megal$ consists of (1,2)-pairs exclusively, while the stable matching output by the approximation algorithm contains (1,3)-pairs only. We will now present an algorithm that either delivers a stable solution $M'$ containing at least a significant amount of the (1,2)-pairs in $\megal$ or a certificate that $\megal$ contains only a few (1,2)-pairs and thus any stable solution is a good approximation.

	To simplify our proof, we execute some basic pre-processing of the input graph. If there are any (1,1)-pairs in $G$, then these can be fixed, because they occur in every stable matching and thus can only lower the approximation ratio. Similarly, if an arbitrary stable matching contains a (3,3)-pair, then this edge appears in all stable matchings and thus we can fix it. Those (3,3)-pairs that do not belong to the set of stable edges can be deleted from the graph. From this point on, we assume that no edge is ranked first or last by both of its end vertices in $G$ and prove the approximation ratio for such graphs.

Take the following weight function on all $uv \in E$:
	\[   
w(uv) = 
     \begin{cases}
       0 &\quad\text{if } uv\text{ is a (1,2)-pair},\\ 
       1 &\quad\text{otherwise.}
       \end{cases}
\]
	
	We designed $w(uv)$ to fit the necessary U-shaped condition of Teo and Sethuraman's 2-approximation algorithm~\cite{TS97,TS98}. This condition on the weight function is as follows. We are given a function $f_p$ on the neighbouring edges of a vertex~$p$. Function $f_p$ is \emph{U-shaped} if it is non-negative and there is a neighbour $q$ of $p$ so that $f_p$ is monotone decreasing on neighbours in order of $p$'s preference until $q$, and $f_p$ is monotone increasing on neighbours in order of $p$'s preference after~$q$. The approximation guarantee of Teo and Sethuraman's algorithm holds for an edge weight function $w(uv)$ if for every edge $uv \in E$, $w(uv)$ can be written as $w(uv) = f_u(uv) + f_v(uv)$, where $f_u$ and $f_v$ are U-shaped functions.
	
	Our $w(uv)$ function is clearly U-shaped, because at each vertex the sequence of edges in order of preference is either monotone increasing or it is $(1,0,1)$. Since $w$ itself is U-shaped, it is easy to decompose it into a sum of U-shaped $f_v$ functions, for example by setting $f_v(uv) = f_u(uv) = \frac{w(uv)}{2}$ for every edge~$uv$. 
	
	Let $M$ denote an arbitrary stable matching, let $M^{(1,2)}$ be the set of (1,2)-pairs in $M$, and let $\mopt$ be a minimum weight stable matching with respect to the weight function~$w(uv)$. Since $\mopt$ is by definition the stable matching with the largest number of (1,2)-pairs, $|\mopt^{(1,2)}| \geq |\megal^{(1,2)}|$. We also know that $w(M) = |M| - |M^{(1,2)}|$ for every stable matching~$M$.
	
	Due to Teo and Sethuraman's approximation algorithm~\cite{TS97,TS98}, it is possible to find a stable matching $M'$ whose weight approximates $w(\mopt)$ within a factor of~2. Formally,    
    $$|M| - |M'^{(1,2)}| = w(M') \leq 2w(\mopt) = 2|M| - 2|\mopt^{(1,2)}|.$$
    This gives us a lower bound on $|M'^{(1,2)}|$.    
    \begin{equation}
    \label{eq:M'12}
    |M'^{(1,2)}| \geq 2|\mopt^{(1,2)}| - |M| \geq 2|\megal^{(1,2)}| - |M|
    \end{equation}
    
    We distinguish two cases from here on, depending on the sign of the term on the right. In both cases, we establish a lower bound on $c(\megal)$ and an upper bound on~$c(M')$. These will give the desired upper bound of 9/7 on $\frac{c(M')}{c(\megal)}$.  
    \begin{enumerate}[1)]
    
    \item $2|\megal^{(1,2)}| - |M| \leq 0$
    
    The derived lower bound for $|M'^{(1,2)}|$ is negative or zero in this case. Yet we know that at most half of the edges in $\megal$ are (1,2)-pairs, and $c(e) \geq 4$ for the rest of the edges in~$\megal$. Let us denote $|M| -2 |\megal^{(1,2)}| \geq 0$ by~$x$. Thus, $|\megal^{(1,2)}|=\frac{|M|-x}{2}$. 
		\begin{equation}
        \label{eq:1Megal}
        c(\megal) \geq \frac{|M|-x}{2} \cdot 3 + \frac{|M|+x}{2} \cdot 4 = 3.5 |M| +0.5x
		\end{equation}
        
		We use our arguments in the proof of Claim~\ref{cl:43} to derive that an arbitrary stable matching approximates $c(\megal)$ on the $\frac{|M|-x}{2}$ (1,2)-edges within a ratio of $\frac{4}{3}$, while its cost on the remaining $\frac{|M|+x}{2}$ edges is at most~5. These imply the following inequalities for an arbitrary stable matching~$M$. 
    \begin{equation}
        \label{eq:1M}
        c(M) \leq \frac{|M|-x}{2} \cdot 3 \cdot \frac{4}{3} + \frac{|M|+x}{2} \cdot 5 = 4.5 |M| + 0.5x
        \end{equation}
		
		We now combine (\ref{eq:1Megal}) and (\ref{eq:1M}). The last inequality holds for all $x  \geq 0$.
\begin{equation*}
\begin{split}
\frac{c(M)}{c(\megal)} &\leq
\frac{4.5|M| +0.5x}{3.5|M| +0.5x} \leq \frac{9}{7}\
\end{split}
\end{equation*}
    
    \item $2|\megal^{(1,2)}| - |M| >0$
    
    Let us denote $2|\megal^{(1,2)}| - |M|$ by~$\hat{x}$. Notice that $|\megal^{(1,2)}| = \frac{\hat{x} + |M|}{2}$. We can now express the number of edges with cost~3, and at least 4 in~$\megal$. 
\begin{eqnarray}    
c(\megal) & \geq &
3  \cdot \frac{\hat{x} + |M|}{2} + 4  \cdot \left(|M| - \frac{\hat{x} + |M|}{2}\right) \nonumber \\
& = & 3.5|M| -0.5\hat{x}
\label{eq:2Megal}
\end{eqnarray}
Let $|M'^{(1,2)}|=z_1$.  Then exactly $z_1$ edges in $M'$ have cost~3.  It follows from (\ref{eq:M'12}) that $z_1\geq \hat{x}$.  Suppose that $z_2\leq z_1$ edges in $M'^{(1,2)}$ correspond to edges in $\megal^{(1,2)}$.  Recall that $|\megal^{(1,2)}|=\frac{\hat{x}+|M|}{2}$.  The remaining $\frac{|M|+\hat{x}}{2}-z_2$ edges in $\megal^{(1,2)}$ have cost at most 4 in~$M'$.  This leaves $|M|-|\megal^{(1,2)}|-(z_1-z_2)=\frac{|M|-\hat{x}}{2}-z_1+z_2$ edges in $\megal$ that are as yet unaccounted for; these have cost at most 5 in both $\megal$ and~$M'$.  We thus obtain:

\begin{eqnarray}
c(M') & \leq & 
3z_1 + 4 \left(\frac{|M| + \hat{x}}{2}-z_2\right) + 5 \left(\frac{|M| - \hat{x}}{2}-z_1+z_2\right) \nonumber \\
&=& 4.5|M| - 0.5\hat{x} -2 z_1 + z_2 \nonumber\\
&\leq& 4.5|M| -1.5\hat{x}
\label{eq:2M}
\end{eqnarray}

     Combining (\ref{eq:2Megal}) and (\ref{eq:2M}) delivers the following bound.
\begin{equation*}
\begin{split}
\frac{c(M')}{c(\megal)} &\leq
\frac{4.5|M| -1.5\hat{x}}{3.5|M| -0.5\hat{x}} < \frac{9}{7}\
\end{split}
\end{equation*}
    The last inequality holds for every $\hat{x} > 0$.
    \end{enumerate}
    
We derived that $M'$, the 2-approximate solution with respect to the weight function $w(uv)$ delivers a $\frac{9}{7}$-approximation in both cases.
 \end{proof}

Using analogous techniques we can establish similar approximation bounds for \textsc{egal 4-sri} and \textsc{egal 5-sri}, as follows.

\begin{theorem}
\label{th:egal4}
	\textsc{egal 4-sri} is approximable within $11/7$.
\end{theorem}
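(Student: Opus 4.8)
The plan is to mirror the structure of the proof of Theorem~\ref{th:egal3} exactly, adapting each numerical bound to the case $d=4$. First I would prove the analogue of Claim~\ref{cl:43}: that in \textsc{4-sri} any stable matching approximates $c(\megal)$ within some factor. The key step is to enumerate all $(i,j)$-pairs with $1\leq i\leq j\leq 4$ and, for each, bound the worst-case contribution to $c(M)$ against the contribution $i+j$ to $c(\megal)$, using the same observation that if $uv\in\megal$ but $uv\notin M$ then one endpoint improves and the other worsens in $M$ relative to $\megal$. I expect the worst ratio to again be realised by the pair type in which $\megal$ has the smallest total rank $i+j$ that can degrade maximally in $M$: a $(1,2)$-pair in $\megal$ (cost~3) can become a $(1,4)$-pair in $M$ (cost at most~5), giving ratio $5/3$. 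So the ``trivial'' per-pair bound should be $5/3$, and the job of the weight-function argument is to push this down to $11/7$.

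Next I would carry out the same pre-processing: fix all $(1,1)$-pairs (they lie in every stable matching) and fix or delete $(4,4)$-pairs analogously, so that we may assume no edge is ranked first-and-first or last-and-last. I would then define the weight function $w(uv)=0$ if $uv$ is a $(1,2)$-pair and $w(uv)=1$ otherwise, and verify it is U-shaped at every vertex so that Teo and Sethuraman's $2$-approximation algorithm~\cite{TS97,TS98} applies; the local sequence of weights in preference order is still of the form $(\dots,0,\dots)$ with a single valley, hence U-shaped, and the decomposition $f_u=f_v=w/2$ works verbatim. As before this yields a stable matching $M'$ with $|M|-|M'^{(1,2)}|=w(M')\leq 2w(\mopt)=2|M|-2|\mopt^{(1,2)}|$, so $|M'^{(1,2)}|\geq 2|\megal^{(1,2)}|-|M|$, and I would split into the two cases according to the sign of $2|\megal^{(1,2)}|-|M|$.

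In each case the scheme is identical to Theorem~\ref{th:egal3}, only the constants change. In case~1 ($2|\megal^{(1,2)}|-|M|\leq 0$), writing $|M|-2|\megal^{(1,2)}|=x\geq 0$, the $(1,2)$-edges of $\megal$ have cost~3 and all other edges of $\megal$ have cost at least~$4$, while an arbitrary stable matching costs at most $\tfrac{4}{3}\cdot 3$ on the former and at most~$5$ on the latter; this gives lower and upper bounds of the form $a|M|+bx$ whose ratio must be checked to be at most $11/7$ for all $x\geq 0$. In case~2 ($2|\megal^{(1,2)}|-|M|=\hat{x}>0$), I would introduce the same variables $z_1=|M'^{(1,2)}|\geq\hat{x}$ and $z_2\leq z_1$ (the $(1,2)$-edges of $M'$ that coincide with those of $\megal$), account for the edges of $\megal$ by their cost in $M'$ (cost~3 on the $z_1$ coincident edges, cost at most~$4$ on the remaining $(1,2)$-edges of $\megal$, cost at most~$5$ elsewhere), and simplify using $z_1\geq\hat{x}$ and $z_2\leq z_1$ to obtain a bound of the form $c(M')\leq 4.5|M|-1.5\hat{x}$ against $c(\megal)\geq 3.5|M|-0.5\hat{x}$, whose ratio is below $11/7$ for $\hat{x}>0$. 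The main obstacle I anticipate is purely bookkeeping: verifying that the larger rank range (ranks up to~$4$ rather than~$3$) does not introduce a worse off-diagonal pair type that breaks either the $\tfrac{4}{3}$ factor on the $(1,2)$-edges or the ``cost at most~$5$'' bound on the non-$(1,2)$ edges, and confirming that the resulting linear-fractional inequalities indeed evaluate to the claimed $11/7$ rather than something larger. Everything else transfers from Theorem~\ref{th:egal3} unchanged.
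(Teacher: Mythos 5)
Your overall strategy---the same weight function, the same appeal to Teo and Sethuraman's algorithm, the same two cases on the sign of $2|\megal^{(1,2)}|-|M|$---is exactly the paper's, but the cost accounting you carry over from the $d=3$ case is wrong for $d=4$, and the obstacle you flag at the end is real and is not resolved by your write-up. Two specific bounds fail. First, a $(1,2)$-pair of $\megal$ that is not in $M$ now costs up to $5$ in $M$ (ratio $5/3$), not $4$ (ratio $4/3$) as you use in case~1; you correctly state $5/3$ in your first paragraph and then revert to the $d=3$ value $4/3$ in the case analysis. Second, and more importantly, a non-$(1,2)$ edge of $\megal$ can cost up to $7$ in $M$ (e.g.\ a $(1,4)$-, $(2,4)$- or $(3,4)$-pair), not $5$. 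If you repair these with the obvious two-bucket estimate---$(1,2)$-pairs of $\megal$ cost at most $5$ in $M$, everything else at most $7$---you only get $c(M)\leq 6|M|+x$ against $c(\megal)\geq 3.5|M|+0.5x$, i.e.\ a ratio of $12/7$, which is weaker than the claimed $11/7$.

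The missing idea is a finer stratification of the non-$(1,2)$ edges of $\megal$ by their cost \emph{in $\megal$}. The paper introduces a third parameter $y$, the number of edges of $\megal$ of cost at least $5$, and exploits the fact that an $(i,j)$-pair of $\megal$ (with $i\leq j$) costs at most $4+(j-1)$ in any other stable matching, hence degrades additively by at most $3-i\leq 2$. This yields $c(\megal)\geq 3.5|M|+0.5x+y$ and $c(M)\leq 5.5|M|+0.5x+y$ in case~1, whose ratio is at most $11/7$, with the analogous refinement in case~2 giving $c(M')\leq 5.5|M|-2.5\hat{x}+y$ against $c(\megal)\geq 3.5|M|-0.5\hat{x}+y$. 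Your claimed case-2 bound $c(M')\leq 4.5|M|-1.5\hat{x}$ would actually imply a $9/7$-approximation and cannot be right for $d=4$, since it rests on the $d=3$ per-edge costs. So the skeleton of your argument is correct, but without the extra bookkeeping variable the constants do not come out to $11/7$.
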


\begin{proof}
We start with a statement analogous to Claim~\ref{cl:43}.
\begin{cl}
In an instance of \textsc{egal 4-sri}, any stable matching approximates $c(\megal)$ within a factor of~$5/3$. 
\end{cl}
\proof	As earlier, we can fix all (1,1)-pairs and eliminate all (4,4)-pairs from the instance. Table~\ref{ta:egal4} contains all cases for $uv$ edges in $\megal$ and the corresponding costs in an arbitrary stable matching.\qed
	\begin{table}[t!]
	\setlength{\tabcolsep}{6pt}
		\centering
			\begin{tabular}{cccc}
					$uv$ & worst case cost at $u$ & worst case cost at $v$ & cost ratio\\ 
					\hline
					(1,2) & 4 & 1 & 5/3\\
					(1,3) & 4 & 2 & 6/4\\
					(1,4) & 4 & 3 & 7/5\\
					(2,2) & 4 & 1 & 5/4\\
					(2,3) & 4 & 2 & 6/5\\
					(2,4) & 4 & 3 & 7/6\\
					(3,3) & 4 & 2 & 6/6\\
					(3,4) & 4 & 3 & 7/7\\
                    \hline
        	\end{tabular}
		\caption{$uv$ edges and the corresponding costs in \textsc{egal 4-sri}.}
	\label{ta:egal4}
	\end{table}

We define the same weight function $w(uv)$ as in the proof of Theorem~\ref{th:egal3}. We remark here that $w(uv)$ remains U-shaped for preference lists of length 4, because at each vertex the sequence of edges in order of preference is either monotone increasing or it is (1,0,1,1). Since we derived Inequality~(\ref{eq:M'12}) without using the bounded degree property, it holds for \textsc{egal 4-sri} as well. We distinguish two cases based on the sign of $2|\megal^{(1,2)}| - |M|$.

\begin{enumerate}[1)]
\item $2|\megal^{(1,2)}| - |M| \leq 0$\\
 Let us denote $|M| -2 |\megal^{(1,2)}| \geq 0$ by~$x$. Thus, $|\megal^{(1,2)}|=\frac{|M|-x}{2}$. Furthermore, let $y$ denote the number of edges with cost at least 5 in~$\megal$. 
	\begin{eqnarray*}
        c(\megal) & \geq & \frac{|M|-x}{2} \cdot 3 + \left( \frac{|M|+x}{2} -y \right)\cdot 4 +5y\\ 
        & = & 3.5 |M| +0.5x +y
	\end{eqnarray*}
                
    \begin{equation*}
        c(M) \leq \frac{|M|-x}{2} \cdot 3 \cdot \frac{5}{3} + \left( \frac{|M|+x}{2} -y \right) \cdot 6 + 7y= 5.5 |M| + 0.5x +y
    \end{equation*}
    
\begin{equation*}
\begin{split}
\frac{c(M)}{c(\megal)} &\leq
\frac{5.5|M| +0.5x+y}{3.5|M| +0.5x+y} \leq \frac{11}{7}\
\end{split}
\end{equation*}

\item $2|\megal^{(1,2)}| - |M| > 0$\\
Let $\hat{x}$ denote $2|\megal^{(1,2)}| - |M|$ and $y$ the number of edges with cost at least 5 in~$\megal$. Due to Inequality~(\ref{eq:M'12}), we know that at least $\hat{x}$ $(1,2)$-pairs in $\megal$ correspond to edges of cost~3 in~$M'$. The remaining $\frac{|M|-\hat{x}}{2}$ $(1,2)$-pairs in $\megal$ correspond to edges of cost at most 5 in~$M'$.
 
$$c(\megal) \geq \frac{\hat{x}+|M|}{2} \cdot 3 +4 \cdot(\frac{|M|-\hat{x}}{2}-y) +5y = 3.5 |M| -0.5 \hat{x} +y$$

$$c(M') \leq 3\hat{x} + 5\cdot \frac{|M|-\hat{x}}{2} +6\cdot (\frac{|M|-\hat{x}}{2} -y) + 7y = 5.5|M|-2.5\hat{x} +y$$
\end{enumerate}

\begin{equation*}
\begin{split}
\frac{c(M')}{c(\megal)} &\leq
\frac{5.5|M| -2.5\hat{x} +y}{3.5|M| -0.5\hat{x} +y} < \frac{11}{7}\
\end{split}
\end{equation*}
~ \vspace{-8mm} \\
\end{proof}

\begin{theorem}
\label{th:egal5}
	\textsc{egal 5-sri} is approximable within $13/7$.
\end{theorem}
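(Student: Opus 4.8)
The plan is to follow the template of Theorems~\ref{th:egal3} and~\ref{th:egal4} essentially verbatim, only adjusting the arithmetic for preference lists of length~$5$. First I would prove the analogue of Claim~\ref{cl:43}: \emph{in an instance of \textsc{egal 5-sri}, any stable matching approximates $c(\megal)$ within a factor of~$2$} (that is, $6/3$). As before I would fix every $(1,1)$-pair and fix or delete every $(5,5)$-pair, and then tabulate, for each $(i,j)$-pair with $1\leq i\leq j\leq 5$, the worst-case cost in an arbitrary stable matching~$M$. Exactly as in Table~\ref{ta:egal4}, when $uv\notin M$ one endpoint is better off and the other worse off, so the total cost is maximised by sending the worse-off endpoint to its last choice (cost~$5$) and letting the better-off endpoint be as expensive as it can while remaining better off; this gives $\megal$-cost $i+j$ and worst $M$-cost at most $j+4$. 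The largest ratio $\frac{j+4}{i+j}$ over all admissible pairs is attained by the $(1,2)$-pair, yielding $6/3=2$ and establishing the claim.

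For the refined bound I would reuse the weight function $w(uv)$ (namely $0$ on $(1,2)$-pairs and $1$ otherwise), observing that it remains U-shaped for lists of length~$5$ because at every vertex the sequence of edge weights in preference order is either monotone increasing or $(1,0,1,1,1)$. Teo and Sethuraman's algorithm then yields a stable matching $M'$ for which Inequality~(\ref{eq:M'12}) still holds, since that inequality was derived without using the degree bound. I would then split into the same two cases according to the sign of $2|\megal^{(1,2)}|-|M|$, using an arbitrary stable matching $M$ in the first case (where $\megal$ has few $(1,2)$-pairs) and the special matching $M'$ in the second. In each case the goal is a lower bound on $c(\megal)$ and an upper bound on $c(M)$ or $c(M')$ of the schematic form $3.5|M|\pm 0.5t+(\mathrm{aux})$ and $6.5|M|\mp ct+(\mathrm{aux})$, where $t$ denotes $x$ or $\hat x$, so that at the extremal point the ratio is exactly $6.5/3.5=13/7$ and the auxiliary terms can only lower it.

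The one genuinely new ingredient, and the step I expect to be the main obstacle, is the bookkeeping of the higher cost classes. For $d=4$ a single variable $y=\#\{\text{edges of }\megal\text{ of cost}\geq 5\}$ sufficed, because there the maximum cost in any stable matching is $7$, so raising an edge above the baseline increased the $\megal$-cost and the worst $M$-cost by the same amount. For $d=5$ the maximum cost is~$9$, and an edge of $\megal$-cost~$5$ (say a $(1,4)$-pair) can cost~$8$ in $M$, so a single ``$\geq 5$'' threshold would credit the numerator more than the denominator and break the ratio. I would therefore introduce \emph{two} auxiliary variables, $y_1=\#\{\text{non-}(1,2)\text{ edges of cost}\geq 5\}$ and $y_2=\#\{\text{non-}(1,2)\text{ edges of cost}\geq 6\}$, refining the non-$(1,2)$ edges into classes of cost $4$, cost $5$ and cost $\geq 6$, whose worst $M$-costs are $7$, $8$ and~$9$ respectively. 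The decisive point is that across these three classes the excess of the $\megal$-cost over its baseline~$4$ and the excess of the worst $M$-cost over its baseline~$7$ rise in lockstep (by $0,1,2$), which is an instance of the uniform inequality $m_e-g_e\leq d-2=3$, where $m_e$ and $g_e$ are the worst $M$-cost and the $\megal$-cost of an edge~$e$. Summing then gives $c(\megal)\geq 3.5|M|\pm 0.5t+(y_1+y_2)$ and $c(M\text{ or }M')\leq 6.5|M|\mp ct+(y_1+y_2)$ with \emph{identical} coefficients on $y_1+y_2$, so the matched auxiliary terms cancel in the ratio exactly as the single $y$ did for $d=4$; a short calculation then confirms $\frac{c(M')}{c(\megal)}\leq 13/7$ in both cases, with strict inequality once $t>0$. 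The only routine work left is verifying that the class-by-class worst $M$-costs really are $7,8,9$, i.e.\ inspecting the small table of $(i,j)$-pairs.
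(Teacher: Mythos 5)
Your proposal is correct and follows essentially the same route as the paper's proof: the same factor-$2$ claim, the same weight function with the $(1,0,1,1,1)$ U-shape check, the same two cases, and the same refined cost accounting --- your cumulative counters $y_1,y_2$ satisfy $y_1+y_2=y+2z$ for the paper's variables $y$ (edges of $\megal$ of cost exactly $5$) and $z$ (cost at least $6$), so the bounds $c(\megal)\geq 3.5|M|\pm 0.5t+y+2z$ and $c(M')\leq 6.5|M|\mp ct+y+2z$ you describe are literally the paper's. Your ``lockstep'' observation that the worst stable-matching costs $7,8,9$ exceed the $\megal$-costs $4,5,6$ by the same amount is exactly the reason the auxiliary terms cancel in the paper's ratio as well.
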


\begin{proof}
Again we start with a statement analogous to Claim~\ref{cl:43}.
\begin{cl}
In an instance of \textsc{egal 5-sri}, any stable matching approximates $c(\megal)$ within a factor of~$2$.  
\end{cl}
\proof
	As earlier, we can fix all (1,1)-pairs and eliminate all (5,5)-pairs from the instance. Table~\ref{ta:egal5} contains all cases for $uv$ edges in $\megal$ and the corresponding costs in an arbitrary stable matching. \qed
	
	We remark that $w(uv)$ remains U-shaped for preference lists of length~5, because at each vertex the sequence of edges in order of preference is either monotone increasing or it is (1,0,1,1,1). We observe that Inequality~(\ref{eq:M'12}) holds for \textsc{egal 5-sri} as well. Thus we distinguish two cases based on the sign of $2|\megal^{(1,2)}| - |M|$. 
	
	\begin{table}[htbp]
	\setlength{\tabcolsep}{6pt}
		\centering
			\begin{tabular}{cccc}
					$uv$ & worst case cost at $u$ & worst case cost at $v$ & cost ratio\\ 
					\hline
					(1,2) & 5 & 1 & 6/3\\
					(1,3) & 5 & 2 & 7/4\\
					(1,4) & 5 & 3 & 8/5\\
					(1,5) & 5 & 4 & 9/6\\
					(2,2) & 5 & 1 & 6/4\\
					(2,3) & 5 & 2 & 7/5\\
					(2,4) & 5 & 3 & 8/6\\
					(2,5) & 5 & 4 & 9/7\\
					(3,3) & 5 & 2 & 7/6\\
					(3,4) & 5 & 3 & 8/7\\
					(3,5) & 5 & 4 & 9/8\\
					(4,4) & 5 & 3 & 8/8\\
					(4,5) & 5 & 4 & 9/9\\
                    \hline
			\end{tabular}
		\caption{$uv$ edges and the corresponding costs in \textsc{egal 5-sri}.}
		\label{ta:egal5}
	\end{table}

\begin{enumerate}[1)]
\item $2|\megal^{(1,2)}| - |M| \leq 0$\\
 Let us denote $|M| -2 |\megal^{(1,2)}| \geq 0$ by~$x$. Thus, $|\megal^{(1,2)}|=\frac{|M|-x}{2}$. Furthermore, let $y$ be the number of edges with cost~5 and $z$ the number of edges with cost at least 6 in~$\megal$.
	\begin{eqnarray*}
        c(\megal) & \geq & \frac{|M|-x}{2} \cdot 3 + \left( \frac{|M|+x}{2} -y -z \right)\cdot 4 +5y + 6z \\ 
        & = & 3.5 |M| +0.5x +y + 2z
	\end{eqnarray*}
                
    \begin{eqnarray*}
        c(M) & \leq & \frac{|M|-x}{2} \cdot 3 \cdot \frac{6}{3} + \left( \frac{|M|+x}{2} -y -z \right) \cdot 7 + 8y +9z \\ 
        & = & 6.5 |M| + 0.5x +y + 2z
    \end{eqnarray*}
    
\begin{equation*}
\begin{split}
\frac{c(M)}{c(\megal)} &\leq
\frac{6.5 |M| + 0.5x +y + 2z}{3.5 |M| +0.5x +y + 2z} \leq \frac{13}{7}\
\end{split}
\end{equation*}

\item $2|\megal^{(1,2)}| - |M| > 0$\\
Let $\hat{x}$ denote $2|\megal^{(1,2)}| - |M|$, $y$ the number of edges with cost~5 and $z$ the number of edges with cost at least 6 in~$\megal$.
 
$$c(\megal) \geq \frac{\hat{x}+|M|}{2} \cdot 3 +4 \cdot(\frac{|M|-\hat{x}}{2}-y-z) +5y +6z= 3.5 |M| -0.5 \hat{x} +y +2z$$

$$c(M') \leq 3\hat{x} + 6\cdot \frac{|M|-\hat{x}}{2} +7\cdot (\frac{|M|-\hat{x}}{2} -y-z) + 8y +9z = 6.5|M|-3.5\hat{x} +y +2z$$
\end{enumerate}

\begin{equation*}
\begin{split}
\frac{c(M')}{c(\megal)} &\leq
\frac{6.5|M| -3.5\hat{x} +y +2z}{3.5|M| -0.5\hat{x} +y +2z} < \frac{13}{7}\
\end{split}
\end{equation*}
~ \vspace{-8mm} \\
\end{proof}

Using a similar reasoning for each $d\geq 6$, our approach gives a $c_d$-approxi\-mation algorithm for {\sc egal $d$-sri} where $c_d>2$.  In these cases the 2-approxi\-mation algorithm of Feder \cite{Fed92,Fed94} should be used instead.

\section{Solvability and most-stable matchings in {\sc $d$-srti}}
\label{se:srt}
In this section we study the complexity and approximability of the problem of deciding whether an instance of {\sc $d$-srti} admits a stable matching, and the problem of finding a most-stable matching given an instance of {\sc $d$-srti}.

We begin by defining two problems that we will be studying in this section from the point of view of complexity and approximability. 

\begin{pr}\textsc{solvable $d$-srti}\ \\
	\inp $\mathcal{I} = \langle G, \mathcal O\rangle$, where $G$ is a graph and $\mathcal O$ is a set of preference lists, each of length at most $d$, possibly involving ties.\\
	\ques Is $\mathcal{I}$ solvable?
\end{pr}

\begin{pr} \textsc{min bp $d$-srti}\ \\
	\inp An instance $\mathcal I$ of {\sc $d$-srti}.\\
	\outp A matching $M$ in $\mathcal{I}$ such that $|bp(M)|=bp(\mathcal I)$.
\end{pr}

\noindent
We will show that \textsc{solvable 3-srti} is $\NP$-complete and \textsc{min bp 3-srti} is hard to approximate.  In both cases we will use a reduction from the following satisfiability problem:

\begin{pr}\textsc{(2,2)-e3-sat}\ \\
	\inp $\mathcal{I} = B$, where $B$ is a Boolean formula in CNF, in which each clause comprises exactly 3 literals and each variable appears exactly twice in unnegated and exactly twice in negated form. \\
	\ques Is there a truth assignment satisfying~$B$?
    	\label{pr:22e3sat}
\end{pr}

\noindent \textsc{(2,2)-e3-sat}\ is $\NP$-complete, as shown by Berman et al.~\cite{BKS03}.  We begin with the hardness of \textsc{solvable 3-srti}.

\begin{theorem}
\label{th:3-srti}
	\textsc{solvable 3-srti} is $\NP$-complete.
\end{theorem}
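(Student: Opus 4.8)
The plan is to prove NP-completeness of \textsc{solvable 3-srti} by a reduction from \textsc{(2,2)-e3-sat}. Membership in $\NP$ is immediate: given a matching $M$, checking that no edge blocks it (under the weak-preference definition of blocking that accounts for ties) takes polynomial time, so a stable matching serves as a polynomially verifiable certificate of solvability. The substance of the theorem is $\NP$-hardness, so the bulk of the effort goes into the reduction and its correctness proof.

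First I would design gadgets of bounded degree~$3$. The key design principle is to exploit the fact, established by Ronn and by Irving--Manlove~\cite{Ron90,IM02}, that ties can create instances with no stable matching: an odd cycle of agents with suitably interlocking strict preferences has no stable matching, while ties allow a gadget to become solvable exactly when an external ``signal'' arrives. Concretely, for each variable $x$ of the formula $B$ I would build a \emph{variable gadget} whose stable matchings come in exactly two flavours, encoding the truth value of $x$; the fact that each variable occurs exactly twice positively and twice negatively in \textsc{(2,2)-e3-sat} is what keeps every agent's degree at most~$3$, since each occurrence contributes one edge linking the variable gadget to the corresponding clause gadget. For each clause $c$ I would build a \emph{clause gadget} containing an odd ``inconsistency'' cycle (in the spirit of the Irving--Manlove construction) that admits a stable matching \emph{if and only if} at least one of its three incident literal-edges is set to the satisfying orientation by the adjacent variable gadget. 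Ties within each gadget are placed so as to absorb exactly the slack needed when the clause is satisfied and to force a blocking edge when it is not.

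The correctness argument then splits into the two standard directions. For the forward direction I would show that a satisfying truth assignment of $B$ yields a stable matching of~$\mathcal I$: set each variable gadget to the matching encoding its truth value, propagate the induced choice along the literal-edges, and verify gadget-by-gadget that every clause gadget -- now receiving at least one satisfying signal -- can be completed to a locally stable configuration with no blocking edge, invoking the tie-placement to rule out blocking pairs at the gadget interfaces. For the reverse direction I would argue the contrapositive at the level of each clause: if some clause $c$ were left unsatisfied, then all three incident literal-edges arrive in the ``wrong'' orientation, which forces the odd inconsistency cycle inside $c$'s gadget into a configuration that provably admits a blocking edge, contradicting stability of~$M$. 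Reading off the truth value of each variable from whichever of the two stable configurations its gadget is in then recovers a satisfying assignment.

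The main obstacle, and where I would concentrate the detailed verification, is the \emph{degree bound}. It is easy to force truth-value propagation with longer preference lists, but keeping every preference list of length at most~$3$ while still (i) linking each variable to its four literal occurrences, (ii) embedding an odd inconsistency cycle in each clause, and (iii) ensuring no \emph{spurious} stable matching exists that satisfies the gadgets locally but does not correspond to any consistent assignment, is delicate. In particular I would have to check carefully that the interface edges between a variable gadget and its clause gadgets cannot themselves block, and that the ties do not accidentally create a stable matching that decouples a variable's two positive from its two negative occurrences. Verifying the absence of such spurious solutions -- rather than the existence of the intended ones -- is the genuinely technical part of the proof, and it is exactly here that the strict-versus-tie structure of each agent's list must be pinned down case by case.
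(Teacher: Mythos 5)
Your overall architecture coincides with the paper's: both reduce from \textsc{(2,2)-e3-sat}, both exploit the exactly-two-unnegated/exactly-two-negated occurrence structure to keep every degree at most~3, both encode a variable's truth value in one of two stable matchings of a small cyclic variable gadget, and both attach each literal occurrence to its clause gadget by a single interconnecting edge whose blocking is what detects an unsatisfied clause. The problem is that what you have written is a plan, not a proof. The entire technical content of the theorem is the explicit clause gadget (20 vertices in the paper, with ties of length two placed at precisely the right spots) together with the exhaustive case analysis showing that its internal stable configurations are \emph{exactly} the three intended ones and that no spurious configuration survives. You correctly identify this as ``the genuinely technical part'' and then stop exactly there; asserting that a suitable gadget exists ``in the spirit of the Irving--Manlove construction'' is not sufficient, because the known unsolvable-\textsc{srt} constructions do not respect the degree bound, and producing one that does is the theorem.

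There is also a substantive design point where your sketch, taken literally, points in a direction that is harder than necessary. You propose a clause gadget that ``admits a stable matching if and only if at least one of its three incident literal-edges is set to the satisfying orientation,'' i.e., a gadget whose \emph{internal} solvability is switched by an external signal. In the paper's construction the clause gadget is always internally solvable: it has exactly three stable configurations $M_j^1,M_j^2,M_j^3$, one per literal position, and in each of them exactly one literal vertex $a_j^r$ is matched to its last choice while the other two get their first choices. Unsatisfiability is then detected not by internal unsolvability but by the interconnecting edge $a_j^r v_i^s$ blocking, which happens precisely when \emph{both} endpoints are matched to their last-choice partners within their own gadgets (position $r$ selected on the clause side, literal false on the variable side). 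This ``both ends unhappy'' mechanism is what makes the degree bound achievable, since each literal vertex can afford only one external edge. A gadget whose internal solvability genuinely depends on the orientation of a single external edge to a degree-3 vertex would be considerably more delicate to build, and you give no indication of how. So the missing gadget is not a routine detail to be filled in later; it is the proof.
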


\begin{proof}
Clearly \textsc{solvable 3-srti} belongs to~$\NP$.  To show $\NP$-hardness, we reduce from \textsc{(2,2)-e3-sat} as defined in Problem \ref{pr:22e3sat}.  Let $B$ be a given instance of \textsc{(2,2)-e3-sat}, where $X = \left\{x_1, x_2, \dots , x_n\right\}$ is the set of variables and $C = \left\{c_1, c_2, \dots, c_m\right\}$ is the set of clauses.  We form an instance $\mathcal{I} = (G, \mathcal O)$ of \textsc{3-srti} as follows. Graph $G$ consists of a \emph{variable gadget} for each $x_i$ $(1 \leq i \leq n)$, a \emph{clause gadget} for each $c_j$ $(1 \leq j \leq m)$ and a set of \emph{interconnecting edges} between them; these different parts of the construction, together with the preference orderings that constitute $\mathcal O$, are shown in Figure~\ref{fi:srtigadgets} and will be described in more detail below. 
	
		\begin{figure}[t]
			\centering
            \hspace{-1mm}
            \resizebox{1\columnwidth}{!}{
			\begin{tikzpicture}[scale=0.85, transform shape]
			\pgfmathsetmacro{\d}{2}
			
			\node[vertex, label=above:$y_j^3$] (y3) at (\d, \d) {};
			\node[vertex, label=above:$y_j^4$] (y4) at (2*\d, \d) {};
			\node[vertex, label=below:$y_j^2$] (y2) at (\d, 0) {};
			\node[vertex, label=below:$y_j^1$] (y1) at (\d*2, 0) {};
			\node[vertex, label=below:$p_j^3$] (p3) at (\d*3, 0) {};
			\node[vertex, label=below:$b_j^3$] (b3) at (\d*4, 0) {};
			\node[vertex, label=below:$a_j^3$] (a3) at (\d*5, 0) {};
			\node[vertex, label=below:$q_j^3$] (q3) at (\d*6, 0) {};
			\node[vertex, label=below:$z_j^1$] (x1) at (\d*7, 0) {};
			\node[vertex, label=below:$z_j^2$] (x2) at (\d*8, 0) {};
			\node[vertex, label=above:$z_j^3$] (x3) at (\d*8, \d) {};
			\node[vertex, label=above:$z_j^4$] (x4) at (\d*7, \d) {};
			
			\node[vertex, label=left:$p_j^2$] (p2) at (\d*3, \d) {};
			\node[vertex, label=below:$b_j^2$] (b2) at (\d*4, \d) {};
			\node[vertex, label=below:$a_j^2$] (a2) at (\d*5, \d) {};
			\node[vertex, label=right:$q_j^2$] (q2) at (\d*6, \d) {};
			
			\node[vertex, label=above:$p_j^1$] (p1) at (\d*3, \d*2) {};
			\node[vertex, label=above:$b_j^1$] (b1) at (\d*4, \d*2) {};
			\node[vertex, label=above:$a_j^1$] (a1) at (\d*5, \d*2) {};
			\node[vertex, label=above:$q_j^1$] (q1) at (\d*6, \d*2) {};
			
			\draw [thick] (y3) -- node[edgelabel, near start] {1} node[edgelabel, near end] {2} (y4);
			\draw [thick] (y3) -- node[edgelabel, near start] {2} node[edgelabel, near end] {2} (y2);
			\draw [thick] (y2) -- node[edgelabel, near start] {3} node[edgelabel, near end] {1} (y4);
			
			\draw [thick] (x3) -- node[edgelabel, near start] {1} node[edgelabel, near end] {2} (x4);
			\draw [thick] (x3) -- node[edgelabel, near start] {2} node[edgelabel, near end] {2} (x2);
			\draw [thick] (x2) -- node[edgelabel, near start] {3} node[edgelabel, near end] {1} (x4);
			
			\draw [thick] (y2) -- node[edgelabel, near start] {1} node[edgelabel, near end] {2} (y1);
			\draw [thick] (y1) -- node[edgelabel, near start] {1} node[edgelabel, near end] {3} (p3);
			
			\draw [thick] (x2) -- node[edgelabel, near start] {1} node[edgelabel, near end] {2} (x1);
			\draw [thick] (x1) -- node[edgelabel, near start] {1} node[edgelabel, near end] {3} (q3);
			
			\draw [thick] (p3) -- node[edgelabel, near start] {2} node[edgelabel, near end] {1} (b3);
			\draw [thick] (b3) -- node[edgelabel, near start] {1} node[edgelabel, near end] {1} (a3);
			\draw [thick] (a3) -- node[edgelabel, near start] {3} node[edgelabel, near end] {2} (q3);
			
			\draw [thick] (p1) -- node[edgelabel, near start] {1} node[edgelabel, near end] {1} (b1);
			\draw [thick] (b1) -- node[edgelabel, near start] {1} node[edgelabel, near end] {1} (a1);
			\draw [thick] (a1) -- node[edgelabel, near start] {3} node[edgelabel, near end] {1} (q1);
			
			\draw [thick] (p1) -- node[edgelabel, near start] {2} node[edgelabel, near end] {1} (b2);
			\draw [thick] (b2) -- node[edgelabel, near start] {1} node[edgelabel, near end] {1} (a2);
			\draw [thick] (a2) -- node[edgelabel, near start] {3} node[edgelabel, near end] {2} (q1);
			
			\draw [thick] (p1) -- node[edgelabel, near start] {3} node[edgelabel, near end] {1} (p2);
			\draw [thick] (p2) -- node[edgelabel, near start] {2} node[edgelabel, near end] {1} (p3);
			
			\draw [thick] (q1) -- node[edgelabel, near start] {3} node[edgelabel, near end] {1} (q2);
			\draw [thick] (q2) -- node[edgelabel, near start] {2} node[edgelabel, near end] {1} (q3);
			
			\node[vertex, label=left:$v_i^1$] (v1) at (\d*7,\d*3) {};
			\node[vertex, label=right:$v_i^2$] (v2) at (\d*8,\d*3) {};
			\node[vertex, label=right:$v_i^3$] (v3) at (\d*8,\d*2) {};
			\node[vertex, label=below:$v_i^4$] (v4) at (\d*7,\d*2) {};
			
			\draw [thick] (v1) -- node[edgelabel, near start] {1} node[edgelabel, near end] {3} (v2);
			\draw [thick] (v2) -- node[edgelabel, near start] {1} node[edgelabel, near end] {3} (v3);
			\draw [thick] (v3) -- node[edgelabel, near start] {1} node[edgelabel, near end] {3} (v4);
			\draw [thick] (v4) -- node[edgelabel, near start] {1} node[edgelabel, near end] {3} (v1);
			
			\draw [thick, dotted] (v4) to[out=120,in=60, distance=2cm ]  node[edgelabel, very near start] {2} node[edgelabel, very near end] {2} (a1);
			
			\draw [thick, dotted] (v2) to[out=120,in=-20, distance=0.8cm ]  node[edgelabel, near start] {2} (\d*7.2, \d*3.3);
			
			\draw [thick, dotted] (v3) to[out=-60,in=20, distance=0.8cm ]  node[edgelabel, near start] {2} (\d*7.2, \d*1.7);
			
			\draw [thick, dotted] (v1) to[out=120,in=-20, distance=0.8cm ]  node[edgelabel, near start] {2} (\d*6.2, \d*3.3);
			
			\draw [thick, dotted] (a2) to[out=20,in=-160, distance=0.8cm ]  node[edgelabel, near start] {2} (\d*5.7, \d*1.2);
			
			\draw [thick, dotted] (a3) to[out=50,in=-160, distance=0.8cm ]  node[edgelabel, near start] {2} (\d*5.7, \d*0.5);
			\end{tikzpicture}
            }
			\caption{Clause and variable gadgets for \textsc{3-srti}. The dotted edges are the interconnecting edges. The notation used for edge $a_j^1v_i^4$ implies that the first literal of the corresponding clause $c_j$ is the second occurrence of the corresponding variable $x_i$ in negated form.}
			\label{fi:srtigadgets}
		\end{figure}
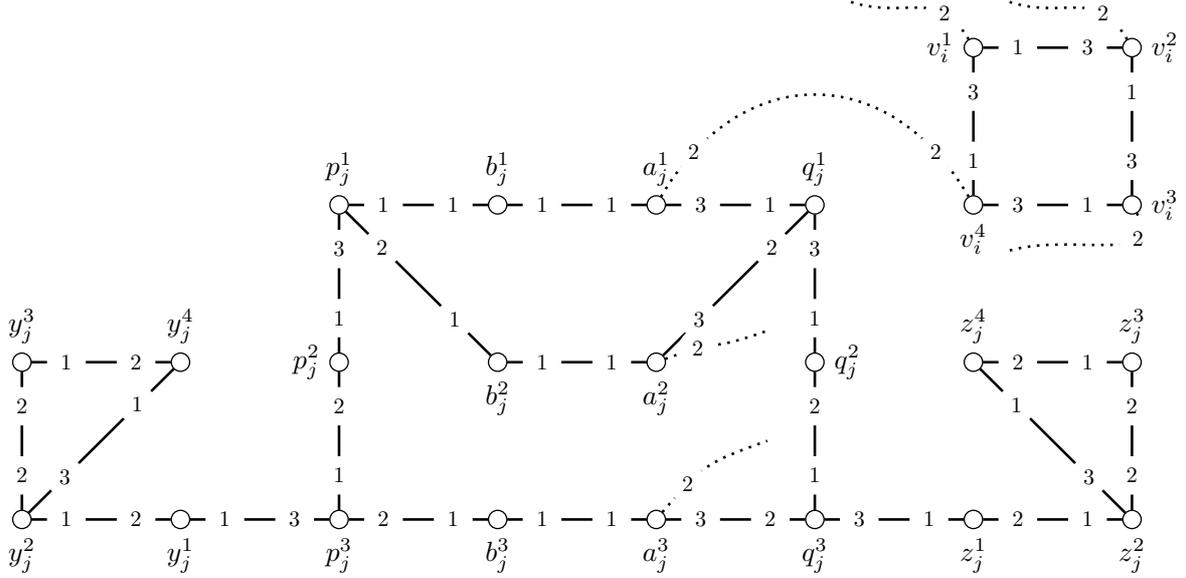
	
	When constructing $G$, we will keep track of the order of the three literals in each clause of $B$ and the order of the two unnegated and two negated occurrences of each variable in~$B$. Each of these four occurrences of each variable is represented by an interconnecting edge.
	
	A variable gadget for a variable $x_i$ ($1\leq i\leq n$) of $B$ comprises the 4-cycle $\langle v_i^1, v_i^2, v_i^3, v_i^4 \rangle$ with cyclic preferences. Each of these four vertices is incident to an interconnecting edge. These edges end at specific vertices of clause gadgets.  The clause gadget for a clause $c_j$ ($1\leq j\leq m$) contains 20 vertices, three of which correspond to the literals in $c_j$; these vertices are also incident to an interconnecting edge.
    
    Due to the properties of \textsc{(2,2)-e3-sat}, $x_i$ occurs twice in unnegated form, say in clauses $c_j$ and~$c_k$ of~$B$. Its first appearance, as the $r$th literal of $c_j$ ($1\leq r \leq 3$), is represented by the interconnecting edge between vertex $v_i^1$ in the variable gadget corresponding to $x_i$ and vertex $a_j^r$ in the clause gadget corresponding to~$c_j$. Similarly the second occurrence of $x_i$, say as the $s$th literal of $c_k$ ($1\leq s\leq 3$) is represented by the interconnecting edge between $v_i^3$ and~$a_k^s$. The same variable $x_i$ also appears twice in negated form. Appropriate $a$-vertices in the gadgets representing those clauses are connected to $v_i^2$ and~$v_i^4$. We remark that this construction involves a gadget similar to one presented by Bir\'o et al.~\cite{BMM12} in their proof of the $\NP$-hardness of {\sc min bp 3-sri}.
		
		Now we prove that there is a truth assignment satisfying $B$ if and only if there is a stable matching $M$ in~$\mathcal I$.
		 
	\begin{cl}
		 For any truth assignment satisfying $B$, a stable matching $M$ can be constructed in~$\mathcal I$.
	\end{cl}
	
	\begin{proof} In Figure~\ref{fi:srtimatchings}, we define two matchings, $M_i^T$ and $M_i^F$, on the variable gadgets and three matchings, $M_j^1, M_j^2$ and $M_j^3$, on the clause gadgets.
	
	\begin{figure}[t]
		\hspace{-2mm}
		\begin{minipage}[t]{0.25\textwidth}
			\vspace{6mm}
            \def\arraystretch{1.7}
			\begin{tabular}{cc}
				$M_i^T$ \hspace{-5mm} &$= \{v_i^1v_i^2, v_i^3v_i^4\}$\\\vspace{1mm}
				$M_i^F$ \hspace{-5mm} &$= \{v_i^1v_i^4, v_i^2v_i^3\}$
			\end{tabular}
		\end{minipage}\hspace{3mm}
		\begin{minipage}[t]{0.65\textwidth}\vspace{3mm}
        \def\arraystretch{1.7}
			\begin{tabular}{cc}
				$M_j^1$ \hspace{-5mm} &$= \{a_j^1q_j^1, b_j^1p_j^1, a_j^2b_j^2, a_j^3b_j^3, q_j^2q_j^3,p_j^2p_j^3, y_j^1y_j^2, y_j^3y_j^4, z_j^1z_j^2, z_j^3z_j^4 \}$\\
				$M_j^2$ \hspace{-5mm} &$= \{a_j^2q_j^1, b_j^2p_j^1, a_j^1b_j^1, a_j^3b_j^3,q_j^2q_j^3,p_j^2p_j^3, y_j^1y_j^2, y_j^3y_j^4, z_j^1z_j^2, z_j^3z_j^4  \}$\\
				$M_j^3$ \hspace{-5mm} &$= \{a_j^3q_j^3, b_j^3p_j^3, a_j^1b_j^1, a_j^2b_j^2, q_j^1q_j^2,p_j^1p_j^2, y_j^1y_j^2, y_j^3y_j^4, z_j^1z_j^2, z_j^3z_j^4  \}$
			\end{tabular}
		\end{minipage}
		\caption{The matchings corresponding to variable $x_i$ if it is set to be $\true$ and $\false$, respectively, and to the first, second or third literal being $\true$ in a fixed clause~$c_j$.}
		\label{fi:srtimatchings}
	\end{figure}

	If a variable $x_i$ $(1\leq i\leq n)$ is assigned to be $\true$, $M_i^T$ is added to $M$, otherwise $M_i^F$ is added. Similarly, since at least one literal in $c_j$ $(1\leq j\leq m)$ is true, let $r$ ($1\leq r\leq 3)$ be the minimum integer such that the literal at position $r$ of $c_j$ is $\true$; add $M_j^r$ to~$M$. The intuition behind this choice is that if a literal is $\true$, then the vertex representing it in the variable gadget is matched to its best choice. On the other hand, if some literals in a clause are $\true$, then the vertex representing the appearance of one of them in that clause is matched to its last-choice vertex.
	
	We claim that no edge blocks~$M$. Checking the edges in the clause and variable gadgets is easy. The five special matchings were designed in such a way that no edge within the gadgets blocks them. More explanation is needed regarding the interconnecting edges. Suppose one of them, $a_j^r v_i^s$, $(r \in \left\{ 1,2,3\right\}, s \in \left\{ 1,2,3,4\right\})$ blocks~$M$. Since $M$ is a perfect matching, $a_j^r$ needs to be matched to its last choice, a $q$-vertex. Similarly, $v_i^s$ has to be matched to its worst partner. While the partner of $a_j^r$ indicates that the literal represented by $v_i^s$ ($x_i$ or $\bar{x}_i$) is $\true$ in the clause, the partner of $v_i^s$ means that the literal is $\false$.
	\end{proof}
		
	\begin{cl}
		For any stable matching $M$ in $\mathcal I$, there is a truth assignment satisfying~$B$.
	\end{cl}
	\begin{proof} In the next three paragraphs we show that the restriction of $M$ to any variable or clause gadget is one of the above listed special matchings, and no interconnecting edge is in~$M$.
	
	First of all, if a vertex $u$ is the only first choice of another vertex, then $u$ certainly needs to be matched in~$M$. This property is fulfilled for all vertices of all clause gadgets except for $y_j^3$ and $z_j^3$ for each $c_j$ ($1 \leq j \leq m$). Let us first study clause gadget~$c_j$. If $y_j^4$ is matched to $y_j^2$, then $y_j^2y_j^3$ blocks~$M$. Thus, $y_j^3 y_j^4$, and similarly, $z_j^3z_j^4$ are part of $M$ for all clause gadgets.
	
	Our proof for clause gadgets from this point involves considering matchings covering all twelve remaining vertices. We differentiate two possible cases, depending on the partner of~$p_j^3$. In the first case, $p_j^3 b_j^3 \in M$. Therefore, $p_j^2p_j^1 \in M$ too, because $p_j^2$ has to be matched. For similar reasons, $\{ b_j^1a_j^1, b_j^2a_j^2, q_j^1q_j^2, q_j^3a_j^3\}$ $\subseteq M$. This gives us matching~$M_j^3$. In the second case, if $p_j^3$ is matched to $p_j^2$, then $\{ b_j^3a_j^3, q_j^3q_j^2 \} \subseteq M$. There are two possible matchings on the remaining six vertices: $\{p_j^1b_j^1, a_j^1q_j^1, b_j^2a_j^2\}$ and $\{p_j^1b_j^2, q_j^1 a_j^2,b_j^1a_j^1\}$. These two matchings together with the lower part of the gadget form $M_j^1$ and~$M_j^2$.
	
	Since all $a$-vertices have a partner within their clause gadgets, no interconnecting edge can be a part of~$M$. For the variable gadgets, it is straightforward to see that $M_i^T$ and $M_i^F$ are the only matchings covering all vertices of the 4-cycles. 
	
	The truth assignment to $B$ is then defined in the following way. Each variable whose gadget has the edges of $M_i^T$ in $M$ is assigned to be $\true$, while all other variables with $M_i^F$ on their gadgets are $\false$.
	
	All that remains is to show that this is indeed a truth assignment. Suppose that there is an unsatisfied clause $c_j$ in~$B$. Since all three of $c_j$'s literals are $\false$, every vertex $v_i^r$ ($1\leq i\leq n$) such that $v_i^r a_j^s$ is an interconnecting edge prefers $a_j^s$ to its partner in $M$ ($1\leq s\leq 3$).  Hence a blocking edge can only be avoided if $a_j^1b_j^1$, $a_j^2b_j^2$ and $a_j^3b_j^3$ are all in~$M$, which never occurs in any stable matching as shown above.
	\end{proof}
    This finishes the proof of Theorem~\ref{th:3-srti}.
\end{proof}

Our construction shows that the complexity result holds even if the preference lists are either strictly ordered or consist of a single tie of length two. Moreover, Theorem \ref{th:3-srti} also immediately implies the following result.

\begin{cor}
	\textsc{min bp 3-srti} is $\NP$-hard.
    \label{cor:minbp3srti}
\end{cor}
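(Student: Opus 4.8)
The plan is to derive Corollary~\ref{cor:minbp3srti} directly from Theorem~\ref{th:3-srti} by observing that the same reduction certifies the hardness of distinguishing between $bp(\mathcal{I})=0$ and $bp(\mathcal{I})\geq 1$. Concretely, let $B$ be an instance of \textsc{(2,2)-e3-sat} and let $\mathcal{I}$ be the instance of \textsc{3-srti} constructed in the proof of Theorem~\ref{th:3-srti}. By that theorem, $B$ is satisfiable if and only if $\mathcal{I}$ is solvable. Now a matching $M$ in $\mathcal{I}$ is stable precisely when $bp(M)=\emptyset$, i.e.\ when $|bp(M)|=0$; so $\mathcal{I}$ is solvable if and only if $bp(\mathcal{I})=0$, and unsolvable if and only if $bp(\mathcal{I})\geq 1$.

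First I would note that any algorithm solving \textsc{min bp 3-srti} returns a matching $M$ with $|bp(M)|=bp(\mathcal{I})$, and in particular reveals the value $bp(\mathcal{I})$. Thus if \textsc{min bp 3-srti} were solvable in polynomial time, we could compute $bp(\mathcal{I})$ in polynomial time, test whether it equals $0$, and thereby decide whether $B$ is satisfiable---contradicting the $\NP$-hardness of \textsc{(2,2)-e3-sat} established in the proof of Theorem~\ref{th:3-srti} (unless $\P=\NP$). This is the entire content of the corollary: \textsc{min bp 3-srti} is $\NP$-hard because an exact solver for it would decide \textsc{solvable 3-srti}, which is $\NP$-complete.

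Since $\mathcal{I}$ already has bounded degree $3$ by construction, no additional gadgetry is needed; the reduction from Theorem~\ref{th:3-srti} transfers verbatim. The only point worth stating explicitly is the equivalence $bp(\mathcal{I})=0 \Leftrightarrow \mathcal{I}$ \emph{is solvable}, which is immediate from the definitions of a stable matching and of $bp(M)$ given in the Introduction. I do not expect any genuine obstacle here: the result is a one-line consequence of the optimization-to-decision reduction, and the ``hard part'' was already carried out in proving Theorem~\ref{th:3-srti}. The proof can therefore be stated in a couple of sentences, simply pointing out that an exact algorithm for \textsc{min bp 3-srti} decides \textsc{solvable 3-srti}.
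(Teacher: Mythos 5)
Your proposal is correct and matches the paper's reasoning exactly: the paper simply states that Theorem~\ref{th:3-srti} ``immediately implies'' the corollary, and the implicit argument is precisely yours---an exact algorithm for \textsc{min bp 3-srti} reveals whether $bp(\mathcal{I})=0$, i.e.\ whether $\mathcal{I}$ is solvable, which is $\NP$-complete to decide. You have merely spelled out the optimization-to-decision reduction that the paper leaves tacit.
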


The following result strengthens Corollary~\ref{cor:minbp3srti}.

\begin{theorem}
\label{th:3srti_inappr}
\textsc{min bp 3-srti} is not approximable within $n^{1-\varepsilon}$, for any $\varepsilon >0$, unless $\P=\NP$, where $n$ is the number of agents.
\end{theorem}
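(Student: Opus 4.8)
The plan is to prove the strong inapproximability bound by a gap-introducing reduction that amplifies the $\NP$-hardness already established in Theorem~\ref{th:3-srti}. The standard technique for achieving an $n^{1-\varepsilon}$ inapproximability result is to take the reduction from \textsc{(2,2)-e3-sat} used in Theorem~\ref{th:3-srti} and augment it so that a satisfying assignment yields a stable matching (hence $bp(M)=0$), whereas an unsatisfied instance forces a large number of blocking edges that can be inflated to a polynomially large gap. First I would reuse the gadget construction $\mathcal{I}=(G,\mathcal{O})$ from the proof of Theorem~\ref{th:3-srti}, recalling that if $B$ is satisfiable then $bp(\mathcal{I})=0$, and if $B$ is unsatisfiable then every matching admits at least one blocking edge.

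The key step is a \emph{padding} or \emph{cloning} argument to separate the two cases by a polynomial factor. Concretely, I would take $N$ disjoint copies of the instance $\mathcal{I}$ (where $N$ is polynomial in the size of $B$) so that the number of agents becomes roughly $N$ times the original. If $B$ is satisfiable, each copy admits a stable matching and the union is a stable matching of the whole instance, so $bp=0$. If $B$ is unsatisfiable, each of the $N$ copies contributes at least one blocking edge independently, giving $bp\geq N$. By choosing $N$ as an appropriate polynomial in $n$, one engineers the gap: a matching with fewer than $N$ blocking edges exists if and only if $B$ is satisfiable. Since the degree bound of $3$ is preserved under disjoint union, the constructed instance remains an instance of \textsc{3-srti}, so the hardness transfers to the degree-bounded setting.

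To obtain the precise $n^{1-\varepsilon}$ bound rather than a mere constant-factor gap, I would be careful to count the total number of agents $n'$ in the padded instance and set $N$ so that $N\geq (n')^{1-\varepsilon}$. The subtlety is that $n'$ itself grows with $N$, so one must solve the resulting inequality to fix $N$ as a suitable polynomial in the original instance size; a standard choice is $N=\Theta(n^{c})$ for a large constant $c$ depending on $\varepsilon$, yielding $N \geq (n')^{1-\varepsilon}$ for all sufficiently large inputs. Then any $n'^{1-\varepsilon}$-approximation algorithm would have to distinguish $bp=0$ from $bp\geq N \geq (n')^{1-\varepsilon}$, and hence would decide satisfiability of $B$ in polynomial time, contradicting $\P\neq\NP$.

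The main obstacle I anticipate is making the gap argument airtight in the \emph{unsatisfiable} direction while honouring the degree bound $d=3$. In particular, I must ensure that the blocking edges forced in distinct copies are genuinely independent and cannot be ``shared'' or cancelled by a clever global matching that mixes copies; using disjoint copies with no interconnecting edges makes this clean, since a matching of a disconnected graph decomposes into matchings of its components and blocking edges are local to components. The second delicate point is the arithmetic tying $N$ to $n'$: one must verify that the inequality $N\geq (n')^{1-\varepsilon}$ can indeed be satisfied for every fixed $\varepsilon>0$ by a polynomially-bounded $N$, which is where the precise exponent bookkeeping matters, but this is routine once the disjoint-copies framework is in place.
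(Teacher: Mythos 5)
Your overall strategy (amplify the hardness of Theorem~\ref{th:3-srti} by taking polynomially many disjoint copies and doing the exponent bookkeeping with $N=\Theta(n_0^{c})$ for $c$ depending on $\varepsilon$) matches the paper's, and your observations about independence of blocking edges across components and preservation of the degree bound are correct. However, there is one genuine missing idea: you set up the gap as $bp=0$ versus $bp\geq N$, and a gap whose ``yes'' side is $0$ does not support a meaningful multiplicative inapproximability argument. If the optimum is $0$, then \emph{any} $\rho$-approximation algorithm, for any finite $\rho$, must already output a matching with $0\cdot\rho=0$ blocking edges, so distinguishing $bp=0$ from $bp\geq 1$ is all that is needed --- and that is exactly Theorem~\ref{th:3-srti}, making your $N$ copies and all the exponent arithmetic superfluous. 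Put differently, with your construction the approximation ratio $|bp(M)|/bp(\mathcal I)$ is undefined (or the claim degenerates to ``no finite ratio is achievable''), which is not the intended content of the theorem.

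The paper avoids this by adding one small \emph{unsolvable} {\sc 3-srti} component --- a $3$-cycle with cyclic strict preferences --- alongside the $k$ disjoint copies of the Theorem~\ref{th:3-srti} instance. This forces $bp(\mathcal I)=1$ when $B$ is satisfiable (the odd cycle contributes exactly one unavoidable blocking edge) and $bp(\mathcal I)\geq k+1$ when $B$ is unsatisfiable, yielding a genuine multiplicative gap of $k+1>n^{1-\varepsilon}$ on instances whose optimum is at least $1$. You would need to add such an unsolvable gadget (and verify, as the paper does, that a single maximum matching in it achieves exactly one blocking edge) for your argument to establish the stated inapproximability in its standard, non-degenerate sense. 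The remaining arithmetic in your sketch, with $c=\lceil 2/\varepsilon\rceil$ and $k=n_0^{c}$, then goes through essentially as in the paper.
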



\begin{proof}
	The core idea of the proof is to gather several copies of the {\sc 3-srti} instance created in the proof of Theorem~\ref{th:3-srti}, together with a small unsolvable {\sc 3-srti} instance. By doing so, we create a \textsc{min bp 3-srti} instance $\mathcal{I}$ in which $bp(\mathcal{I})$ is large if the Boolean formula $B$ (originally given as an instance of \textsc{(2,2)-e3-sat}) is not satisfiable, and $bp(\mathcal{I}) = 1$ otherwise. Therefore, finding a good approximation for $\mathcal I$ will imply a polynomial-time algorithm to decide the satisfiability of~$B$. Our proof is similar to that of an analogous inapproximabilty result for the problem of finding a most-stable matching in an instance of the Hospitals / Residents problem with Couples~\cite{BMM14}. 
	
	The smallest unsolvable instance of {\sc 3-srti} is a 3-cycle with cyclic strict preferences. Aside from this, we add $k$ disjoint copies of {\sc 3-srti} instance created in the proof of Theorem \ref{th:3-srti} (from the same Boolean formula~$B$), for large enough~$k$.  In particular we let $c = \lceil2/\varepsilon\rceil$ and $k = n_0^c$, where $n_0$ is the number of variables in~$B$. We use $m_0$ to denote the number of clauses in~$B$.  Let $\mathcal I$ be the instance of {\sc 3-srti} that has been constructed.  Due to the proof of Theorem~\ref{th:3-srti} above, if $B$ is satisfiable then $bp(\mathcal I)=1$, and if $B$ is not satisfiable then $bp(\mathcal I)\geq k+1$.  Hence a $k$-approximation algorithm for {\sc min bp 3-srti} could be used to solve \textsc{(2,2)-e3-sat} in polynomial time.
	
	In the remainder of the proof we show that $n^{1-\varepsilon} \leq k$, where $n$ is the number of agents in $\mathcal I$, which will imply the statement of the theorem.  With Inequalities~(\ref{sreq:1})-(\ref{sreq:4}) we give an upper bound for~$n$. This is used in Inequalities~(\ref{sreq:5})-(\ref{sreq:8}) as we establish $k$ as an upper bound for $n^{1-\varepsilon}$. Explanations for the steps are given as and when it is necessary after each set of inequalities.
	\begin{align}
	\label{sreq:1} n &= k(4n_0+20m_0)+3 \\
	\label{sreq:2} &= k(4n_0+20\frac{4n_0}{3})+3 \\ 
	\label{sreq:3}   &\leq 32kn_0 \\
	\label{sreq:4}  &= 32 n_0^{c+1}
	\end{align}

	In Equality~(\ref{sreq:1}) can be deduced by inspection of the {\sc 3-srti} instance constructed in the proof of Theorem \ref{th:3-srti}. In step~(\ref{sreq:2}) we substitute $m_0=\frac{4n_0}{3}$, which follows from the structure of~$B$. We can assume without loss of generality that $kn_0\geq 3$, which we use in Inequality~(\ref{sreq:3}). Finally, in Equality~(\ref{sreq:4}) we substitute $k = n_0^c$.
	
	Since $c=\lceil2/\varepsilon\rceil$, the following inequality also holds. 
	\begin{align}
		\label{sreq:extra} \frac{c-1}{c+1} &= 1- \frac{2}{c+1} \geq 1-\varepsilon
	\end{align}

	We can now establish the desired upper bound for $n^{1-\varepsilon}$.
	\begin{align}
	\label{sreq:5} n^{1-\varepsilon} &\leq n^{\frac{c-1}{c+1}}\\
	\label{sreq:6} & \leq 32^{\frac{c-1}{c+1}} n_0^{c-1} \\
	\label{sreq:7} & \leq n_0^c \\
	\label{sreq:8} & =k
	\end{align}
	
	\noindent
	Inequality~(\ref{sreq:5}) is obtained by raising $n$ to the power of each side of Inequality~(\ref{sreq:extra}). Inequality~(\ref{sreq:6}) follows from the bound for $n$ established in Inequalities~(\ref{sreq:1})-(\ref{sreq:4}). Now in Inequality~(\ref{sreq:7}) we can assume without loss of generality that $n_0\geq 32$ and use that $\frac{c-1}{c+1} < 1$. In the last step, we use the definition of~$k$.
	\end{proof}

To complete the study of cases of \textsc{min bp $d$-srti}, we establish a positive result for instances with degree at most~2.

\begin{theorem}
	\textsc{min bp 2-srti} is solvable in $\mathcal{O}(|V|)$ time.
\end{theorem}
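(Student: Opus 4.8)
The plan is to exploit the fact that in an instance of \textsc{2-srti} every preference list has length at most~2, so the underlying graph $G$ has maximum degree~2. Hence each connected component of $G$ is either a simple path or a cycle (isolated vertices being trivial paths). Since every edge of $G$, and in particular every blocking edge, lies inside a single component, the blocking edges of any matching $M$ partition according to components, and $bp(\mathcal I)$ equals the sum over components of the minimum number of blocking edges achievable within that component. I would therefore process each component independently and add up the results; provided the work per component is linear in its size, and since bounded degree gives $|E| = \mathcal O(|V|)$, the whole algorithm runs in $\mathcal O(|V|)$ time.

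For a single path $p_1-p_2-\cdots-p_\ell$ with edges $e_i=p_ip_{i+1}$, the key observation is that whether $e_i$ blocks $M$ depends only on the membership of the three consecutive edges $e_{i-1},e_i,e_{i+1}$ in $M$, together with the (possibly tied) preferences at $p_i$ and $p_{i+1}$: if $e_i\in M$ then it never blocks, and otherwise $e_i$ blocks exactly when $p_i$ is either unmatched or matched via $e_{i-1}$ yet strictly prefers $p_{i+1}$, and simultaneously $p_{i+1}$ is either unmatched or matched via $e_{i+1}$ yet strictly prefers $p_i$. I would run a left-to-right dynamic program whose state records the membership of the two most recent edges $(e_{i-1},e_i)$ in $M$ --- the three feasible states being exactly those that do not place two adjacent edges in $M$. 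Once the membership of $e_{i+1}$ is fixed, the blocking indicator of $e_i$ is fully determined and can be added to the running cost, after which the state advances to $(e_i,e_{i+1})$. Each transition is $\mathcal O(1)$ and the number of states is constant, so the path is handled in time linear in $\ell$, with the two endpoints treated as boundary cases (each endpoint having only one incident edge). Standard backpointers recover an optimal matching, not just its cost.

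For a cycle the same recurrence applies, except that the blocking status of the edges near the ``seam'' couples the first and last decisions. I would resolve this by fixing the membership status of one designated edge (a constant number of choices), running the path-style dynamic program around the remaining edges, and then enforcing consistency and the correct blocking count at the wrap-around; taking the best outcome over the constantly many initial guesses again gives time linear in the cycle length. Correctly handling ties is where care is needed: indifference must be read as the absence of \emph{strict} preference, so a vertex that is indifferent between its current partner and the candidate does not contribute to a blocking edge.

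The step I expect to require the most care is not the asymptotic bound, which is immediate once the state space is constant, but the bookkeeping that guarantees correctness: verifying that the three-edge locality of the blocking condition is faithfully captured by the constant-size state, and handling all boundary and seam cases (path endpoints and the cyclic wrap-around) so that each edge's blocking indicator is counted exactly once. Once this is in place, summing the per-component optima yields both $bp(\mathcal I)$ and a matching attaining it, within the claimed $\mathcal O(|V|)$ time.
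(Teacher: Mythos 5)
Your proposal is correct, but it takes a genuinely different route from the paper. The paper proves an explicit structural characterization: $bp(\mathcal I)$ equals the number of \emph{odd parties} (odd cycles in which every vertex strictly prefers its successor to its predecessor). The lower bound comes from Tan's result that an odd party admits no stable matching, and the upper bound is constructive: bipartite components (paths and even cycles) are instances of {\sc smti} and hence admit stable matchings computable in linear time; an odd cycle that is not an odd party contains a vertex not strictly preferred by either neighbour, which can be left uncovered while the rest is perfectly matched; and in an odd party an arbitrary maximum matching incurs exactly one blocking edge. You instead run a dynamic program over each path or cycle whose state records the matching status of the two most recent edges, exploiting the fact that the blocking condition for $e_i$ is determined by the membership of $e_{i-1},e_i,e_{i+1}$ together with the fixed (possibly tied) local preferences; the seam of a cycle is handled by enumerating a constant number of boundary states. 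Both arguments are sound and give $\mathcal O(|V|)$ time. The paper's approach yields the clean closed-form count of $bp(\mathcal I)$ and exhibits the optimal matchings explicitly, whereas your DP is more mechanical and more robust: it needs no structural characterization, no appeal to Tan's theorem or to the {\sc smti} algorithm, and it would extend directly to weighted or otherwise locally defined objectives on degree-$2$ instances. The one place to be slightly more careful is the cycle seam, where you must fix the joint status of the two edges incident to the cut vertex (not just one edge) to initialize the state and to evaluate the wrap-around edge's blocking indicator consistently; this is still only a constant number of cases and does not affect correctness or the running time.
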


\begin{proof}
For an instance $\mathcal I$ of \textsc{min bp 2-srti}, clearly every component of the underlying graph $G$ is a path or cycle.  We claim that $bp(\mathcal{I})$ equals the number of \emph{odd parties} in~$G$, where an \emph{odd party} is a cycle $C = \langle v_1, v_2, ..., v_k \rangle$ of odd length, such that $v_i$ strictly prefers $v_{i+1}$ to $v_{i-1}$ (addition and subtraction are taken modulo~$k$).
	
	Since an odd party never admits a stable matching, $bp(\mathcal{I})$ is bounded below by the number of odd parties~\cite{Tan91}. This bound is tight: by taking an arbitrary maximum matching in an odd party component, a most-stable matching is already reached. Now we show that a stable matching $M$ can be constructed in all other components.
	
	Each component that is not an odd cycle is therefore a bipartite subgraph (indeed either a path or an even cycle). Such a subgraph therefore gives rise to the restriction of {\sc srti} called the \emph{Stable Marriage problem with Ties and Incomplete lists} ({\sc smti}).  An instance of {\sc smti} always admits a stable solution and it can be found in linear time~\cite{MIIMM02}. Thus these components contribute no blocking edge.
    
    Regarding odd-length cycles that are not odd parties, we will show that there is at least one vertex not strictly preferred by either of its adjacent vertices. Leaving this vertex uncovered and adding a perfect matching in the rest of the cycle results in a stable matching.
		
		Assume that every vertex along a cycle $C_k$ (where $k$ is an odd number) is strictly preferred by at least one of its neighbours. Since each of the $k$ vertices is strictly preferred by at least one vertex, and a vertex $v$ can prefer at most one other vertex strictly, every vertex along $C_k$ has a strictly ordered preference list. Now every vertex can point at its unique first-choice neighbour. To avoid an odd cycle, there must be a vertex pointed at by both of its neighbours. This implies that there is also a vertex $v$ pointed at by no neighbour, and $v$ is hence ranked second by both of its neighbours. 
		\end{proof}
		
\section{Egalitarian stable matchings in {\sc srti}}
\label{se:no_egal_srti}

In this section we outline the difficulties one encounters when attempting to define and study the concept of an egalitarian stable matching in instances of {\sc srti}.

\begin{itemize}
\item When considering the approximability of \textsc{egal $d$-sri}, we restricted attention to the case of solvable instances, in the knowledge that solvability can be determined in linear time~\cite{Irv85}. However in the case of {\sc srti}, we can no longer assume this, since {\sc solvable 3-srti} is $\NP$-complete as Theorem~\ref{th:3-srti} shows.
\item In instances of \textsc{egal $d$-sri}, not all agents are necessarily matched in all stable matchings, but due to Theorem 4.5.2 of~\cite{GI89}, which states that the same agents are matched in all stable matchings, we can discard unmatched agents and consider only the remaining agents when reasoning about approximation algorithms. There is no analogue of Theorem 4.5.2 in the case of \textsc{$d$-srti} (indeed, stable matchings can be of different sizes in a given instance of {\sc srti}~\cite{IM02}). This means that any approximation algorithm for the problem of finding an egalitarian stable matching in an instance of {\sc srti} would need to consider the cost of an unmatched agent in a given stable matching, and the choice of value for such a case is not universally agreed upon in the literature. Chen et al.~\cite{CHSY17} study the fixed-parameter tractability of {\sc egal srti} under different choices of cost value for an unmatched agent, namely 0, some positive constant and the length of its preference list.
\item Similarly in the case of {\sc srti}, the choice of value for the rank of an agent $a_j$ in a given agent $a_i$'s preference list is again not universally agreed upon -- for example if $a_i$ has a tie of length 2 at the head of his preference list, followed strictly by $a_j$, then $\rank(a_i,a_j)$ could reasonably be defined to be either 2 or 3 depending on the definition adopted. In most competitions, everybody in the tie receives the rank that directly follows the number of agents ranked strictly higher than them, which would be 3 in the previous example. On the other hand, setting the rank to the number of ties (of any cardinality) in the list up to the current tie is the correct way of dealing with this issue in markets where agents rank their possible partners into well-separated tiers and the cardinalities of these do not matter as much as the tier they end up being matched to -- this principle assigns 2 to $\rank(a_i,a_j)$ in the example above.
\end{itemize}

\section{Open questions}
\label{sec:open}
Theorems~\ref{th:egal3}, \ref{th:egal4} and \ref{th:egal5} improve on the best known approximation factor for \textsc{egal $d$-sri} for small~$d$. It remains open to come up with an even better approximation or to establish an inapproximability bound matching our algorithm's guarantee. A more general direction is to investigate whether the problem of finding a minimum weight stable matching can be approximated within a factor less than 2 for instances of $d$-{\sc sri} for small~$d$. Finally, the various alternatives regarding the definition of an egalitarian stable matching in instances of {\sc srti} open the gate to a number of questions.

\section*{Acknowledgements}
We thank the anonymous reviewers of this paper and an earlier version of it for their valuable comments, which helped to improve the presentation.



\end{document}